\definecolor{webgreen}{rgb}{0,.5,0}
\definecolor{webbrown}{rgb}{.8,0,0}
\definecolor{emphcolor}{rgb}{0.5,0.95,0.95}
\ifpdf \hypersetup{pdftex,
	pdfstartview=FitH, 
	bookmarksopen=true,
	bookmarksnumbered=true
} \else \hypersetup{dvips} \fi
\newcommand {\ud}{{\rm d}}
\numberwithin{equation}{section}
\newtheorem{theorem}{Theorem}[section]
\newtheorem{proposition}{Proposition}[section]
\newtheorem{remark}{Remark}[section]
\newtheorem{lemma}{Lemma}[section]
\numberwithin{remark}{section} \numberwithin{proposition}{section}
\numberwithin{corollary}{section}
\newcommand {\R}{\mathbb{R}}
\newcommand {\N}{\mathbb{N}}
\newcommand {\E}{\mathbb{E}}
\definecolor{gray}{gray}{0.6}
\begin{document}
	\title[Two-type branching processes with immigration, and the structured coalescents]{Two-type branching processes with immigration, and the structured coalescents}

	\author[M. E. Caballero]{Mar\'ia Emilia Caballero$^\dagger$}
	\thanks{$^\dagger$Instituto de Matem\'aticas, Universidad Nacional Aut\'onoma de M\'exico, \'Area de la Investigaci\'on Cient\'ifica, Circuito Exterior, Ciudad Universitaria, Coyoac\'an 04510,
		Cuidad de M\'exico, M\'exico. Email: marie@matem.unam.mx}
	
	\author[A. Gonzalez]{Adri\'an Gonz\'alez Casanova$^{*}$}
	\thanks{$^{*}$Instituto de Matem\'aticas, Universidad Nacional Aut\'onoma de M\'exico, \'Area de la Investigaci\'on Cient\'ifica, Circuito Exterior, Ciudad Universitaria, Coyoac\'an 04510,
		Cuidad de M\'exico, M\'exico. Email: adriangcs@matem.unam.mx}

	\author[J.L. P\'erez]{Jos\'e Luis P\'erez$^{**}$}\thanks{$^{**}$Departamento de Probabilidad y Estad\'istica, Centro de Investigaci\'on en Matem\'aticas, A.C. Calle Jalisco S/N
		C.P. 36240, Guanajuato, Mexico.
		Email: jluis.garmendia@cimat.mx}
	
	\maketitle
	
	\vspace{-.2in}
	\maketitle
	\begin{abstract}
		We consider a population constituted by two types of individuals; each of them can produce offspring in two different islands (as a particular case the islands can be interpreted as active or dormant individuals). We model the evolution of the population of each type using a two-type Feller diffusion with immigration, and we study the frequency of one of the types, in each island, when the size of the total population in each island is forced to be constant at a dense set of times. This leads to the solution of an SDE which we call \textit{the asymmetric two-island frequency process}. We derive properties of this process and obtain a large population limit when the total size of each island tends to infinity. Additionally, we compute the fluctuations of the process around its deterministic limit. We establish conditions under which the asymmetric two-island frequency process has a moment dual. The dual is a continuous-time two-dimensional Markov chain that can be interpreted in terms of mutation, branching, pairwise branching, coalescence, and a novel mixed selection-migration term. 
	\end{abstract} 
	
	{\small {{\it AMS 2020 subject classifications}:  60J90, 60J80, 92D15, 92D25}
		
		{{\it Key words and phrases}:   Feller diffusion with immigration, asymmetric two-island frequency processes, moment duality.}}
	
	\section{Introduction}
	
	The evolution of the number of individuals in a population and its genetic profile have been extensively studied. Two of the main families of processes used in this task are branching processes and frequency Wright--Fisher type processes,  known also as two-type Fleming--Viot processes. The relation between the two has constituted a formidable source of mathematical questions. Some remarkable results on this topic were made by \cite{7authors} and \cite{EM}. In \cite{EM}, Etheridge and March relate these two families by conditioning a Dawson--Watanabe superprocess to have constant mass, obtaining a  Fleming--Viot superprocess. 
	This technique peaked in \cite{7authors}, where this relation was extended by showing that self-similar branching processes are in correspondence with $\beta$-coalescents. 
	
	In the present paper, we study the relation between two-type Feller diffusions with continuous immigration and the generalized structured Wright--Fisher type models known as the \textit{two-island model with mutation and selection} \cite{NG}.

	
	Our approach to relate these prominent families of processes is different from the one in \cite{7authors} and is inspired by two seminal papers \cite{Gill73,Gill74}, where Gillespie introduced a method to model the competition between two types of individuals. His method relies strongly on his notion of \textit{culling}. He starts with the assumption that each of the  populations evolves according to  independent Feller diffusions with possibly different distributions. At particular times, which we call sampling times, the total mass is returned to a fixed level in such a way that the relative frequency of each type is preserved. It is important that the two competing populations are not assumed to have the same reproduction mechanism. This recurrent sampling was inspired by seasonal species and describes accurately the recurrent sampling that is characteristic of experimental evolution. The most prominent example of experimental evolution is the Lenski experiment  \cite{Lenski} (see  \cite{GKWY} for a probabilistic approach to recurrent sampling experiments). 
	
	
	The Gillespie recurrent sampling method (GRSM) was mathematically formalised in \cite{CGP} and extended in order to include jumps and immigration in the one-dimensional continuous-state branching process (CSBP) that models the population size of each of the types.  The GRSM is weaker than the time-change method in  \cite{7authors} since it relates branching and frequency processes only in distribution, but it is more general: in particular, it can be used on pairs of branching processes with different distributions, leading to asymmetric frequency processes.
	
	In this paper we extend the GRSM to the two-dimensional setting by applying this procedure to two-type Feller diffusions with immigration. The resulting family, which we call \textit{asymmetric two-island frequency processes} includes, as special cases, the seed bank model \cite{BGKW15,BGKW16} and the two-island model \cite{NG}. The main strength of the two-dimensional GRSM is that it allows comparing populations with different reproduction mechanisms, where each individual can be in one of two different states (for example active/dormant or island 1/island 2). Many of the members of this family have a notion of genealogy in the sense of a moment dual. We will provide conditions for moment duality and describe the moment dual processes. This manuscript generalizes the work in \cite{CGP} to the two dimensional setting.

	If we consider two-type branching processes consisting of active and dormant individuals, where active individuals can produce active and dormant individuals while dormant individuals are only able to activate, and apply the GRSM, we obtain the seed bank frequency processes with mutation and selection  \cite{BGKW15,BGKW16}. This relation between seed bank models and two-type branching process is very natural. However, this seems to be the first time that it has been formally proved in the literature. 

	\subsection{Two-type Feller diffusions with immigration}\label{preliminiaries}
We start with a population consisting of two types of individuals, namely $\textbf{x}$ and $\textbf{y}$, distributed between two islands which we refer to as $1$ and $2$. The islands can be interpreted as two different states, for example, dormant and active, or juvenile and mature individuals.
We will assume that individuals can only give birth to offspring of the same type, but the offspirng can reside in any of the two different islands. Additionally, we will consider external immigration events that contribute to the population of individuals of each type and island.

To this end, we consider that the population of individuals of type $\textbf{x}$ is described by a multitype Feller diffusion 
with immigration $\mathbf{X}:=\{\mathbf{X}_t=(X^1_t,X^2_t):t\geq 0\}$, given as the unique strong solution to the following stochastic differential equation (see for instance \cite{BLP}):
\begin{align}\label{sde_X}
	\mathbf{X}_t&=\mathbf{X}_0+\int_0^t(\beta^X+B^X\mathbf{X}_s)ds+\sum_{i=1}^2\int_0^t\sqrt{c_i^XX_s^i}dW_s^{X,i}e_i, \qquad t\geq 0,
\end{align}
where $\mathbf{X}_0:=(x^1,x^2)\in\R_+^2$, $\mathbf{W}^X:=(W^{X,1},W^{X,2})$ is a two-dimensional Brownian motion, $e_1$ and $e_2$ denote the canonical vectors in $\R^2$, and
\begin{itemize}
	\item $c^X=(c_i^X)_{i\in\{1,2\}}\in\R_+^2$.
	\item $\beta^X=(\beta_i^X)_{i\in\{1,2\}}\in\R_+^2$.
	\item $B^X=(B^X_{ij})_{i,j\in\{1,2\}}\in\R^{2\times2}_{(+)}$,
\end{itemize}
where $\R^{2\times2}_{(+)}$ denotes the space of essentially non-negative $2\times2$ matrices i.e. the set of matrices with non-negative off-diagonal entries, and $\R_+^2=[0,\infty)\times[0,\infty)$.

Heuristically, the component $X^i$ describes the total mass of the population of individuals of type $\textbf{x}$ at the island $i$ for each $i=1,2$. 
On the other hand, for each $i=1,2$, the parameter $B^X_{ii}$ is related to the rate at which "infinitesimal individuals" of type $\textbf{x}$ located at the island $i$ produce offspring of the same type at island $i$; while the variance associated to the these branching events is denoted by $c_i^X$. Additionally for $i,j=1,2$ with $i\not=j$, the parameter $B^X_{ij}$ describes the rate at which "infinitesimal individuals" of type $\textbf{x}$ residing at the island $j$ migrate to island $i$. Finally, the parameter $\beta^X_i$ depicts the rate at which immigration events occur that contribute to the total mass of the population of individuals of type $\textbf{x}$ at the island $i$ for each $i=1,2$.

Similarly, the total mass of individuals of type $\textbf{y}$ is given by a multitype Feller diffusion with immigration $\mathbf{Y}:=\{\mathbf{Y}_t=(Y^1_t,Y^2_t):t\geq 0\}$, defined as the unique strong solution to the following stochastic differential equation:
\begin{align}\label{sde_Y}
	\mathbf{Y}_t&=\mathbf{Y}_0+\int_0^t(\beta^Y+B^Y\mathbf{Y}_s)ds+\sum_{i=1}^2\int_0^t\sqrt{c_i^YY_s^i}dW_s^{Y,i}e_i, \qquad t\geq 0.
\end{align}
where $\mathbf{Y}_0:=(y^1,y^2)\in\R_+^2$ and $\mathbf{W}^Y:=(W^{Y,1},W^{Y,2})$ is a two-dimensional Brownian motion independent from $\mathbf{W}^X$. Additionally,
\begin{itemize}
	\item $c^Y=(c_i^Y)_{i\in\{1,2\}}\in\R_+^2$.
	\item $\beta^Y=(\beta_i^Y)_{i\in\{1,2\}}\in\R_+^2$.
	\item $B^Y=(B^Y_{ij})_{i,j\in\{1,2\}}\in\R^{2\times2}_{(+)}$.
\end{itemize}
As in the case of individuals of type $\textbf{x}$, the component $Y^i$ gives the total mass of individuals of type $\textbf{y}$ at the island $i$ for $i=1,2$. The description of the parameters appearing in \eqref{sde_Y} is analogous to those of the process $\mathbf{X}$.

We remark that the two-type Feller diffusions discussed in this section can be obtained as rescaling limits in space and time of two-type Galton Watson processes with immigration (see for instance \cite{Ma2}).

\subsection{The asymmetric two-island frequency process}\label{ATIFM} 
In this section, we will introduce a model for the relative frequency of individuals of type $\textbf{x}$ on two different islands under the assumption that the total mass on each island remains constant in time.
We refer to this model as the asymmetric two-island frequency process.
For $\mathbf{z}=(z_1,z_2)\in(0,\infty)^2$ and $\mathbf{r}=(r_1,r_2)\in[0,1]^2$ we define the asymmetric two-island frequency process $\mathbf{R}^{(\mathbf{z},\mathbf{r})}:=\{\mathbf{R}^{(\mathbf{z},\mathbf{r})}_t=(R^{(\mathbf{z},\mathbf{r}),1}_t,R^{(\mathbf{z},\mathbf{r}),2}_t):t\geq0\}$ as the solution of the following stochastic differential equation 
\begin{align}\label{sde}
	d\mathbf{R}^{(\mathbf{z},\mathbf{r})}_t&=b(\mathbf{R}^{(\mathbf{z},\mathbf{r})}_t)dt+\sqrt{\frac{c_1^X}{z_1}R^{(\mathbf{z},\mathbf{r}),1}_t(1-R^{(\mathbf{z},\mathbf{r}),1}_t)^2+\frac{c_1^Y}{z_1}(R^{(\mathbf{z},\mathbf{r}),1}_t)^2(1-R^{(\mathbf{z},\mathbf{r}),1}_t)}1_{\{R^{(\mathbf{z},\mathbf{r}),1}_t\in[0,1]\}}dW^{1}_te_1\notag\\
	&+\sqrt{\frac{c_2^X}{z_2}R^{(\mathbf{z},\mathbf{r}),2}_t(1-R^{(\mathbf{z},\mathbf{r}),2}_t)^2+\frac{c_2^Y}{z_2}(R^{(\mathbf{z},\mathbf{r}),2}_t)^2(1-R^{(\mathbf{z},\mathbf{r}),2}_t)}1_{\{R^{(\mathbf{z},\mathbf{r}),2}_t\in[0,1]\}}dW^{2}_te_2, \qquad t\geq 0,\notag\\
	\mathbf{R}^{(z,r)}_0&=(r_1,r_2),
\end{align}
where $\mathbf{W}:=\{\mathbf{W}_t=(W^1_t,W^2_t):t\geq0 \}$ is a two-dimensional Brownian motion.
And for $\mathbf{x}=(x_1,x_2)\in\R^2$, we have $b(\mathbf{x}):=(b^1(\mathbf{x}),b^2(\mathbf{x}))$ with 
\begin{align}\label{b_1_a}
	b^1(\mathbf{x}):&=\Bigg[(B_{11}^X-B_{11}^Y)x_1(1-x_1)1_{\{x_1\in[0,1]\}}+B^X_{12}\frac{z_2}{z_1}(1-x_1)x_21_{\{x_1\leq 1,x_2\geq0 \}}-B^Y_{12}\frac{z_2}{z_1}x_1(1-x_2)1_{\{x_1\geq 0,x_2\leq1 \}}\notag\\&-\frac{2}{z_1}(c_1^X-c_1^Y)x_1(1-x_1)1_{\{x_1\in[0,1]\}}\Bigg]+\frac{\beta_1^X}{z_1}(1-x_1)-\frac{\beta_1^Y}{z_1}x_1,
\end{align}
and
\begin{align}\label{b_2_a}
	b^2(\mathbf{x}):&=\Bigg[(B^X_{22}-B_{22}^Y)x_2(1-x_2)1_{\{x_2\in[0,1]\}}+B^X_{21}\frac{z_1}{z_2}x_1(1-x_2)1_{\{x_1\geq 0,x_2\leq1 \}}-B^Y_{21}\frac{z_1}{z_2}x_2(1-x_1)1_{\{x_1\leq 1,x_2\geq0 \}}\notag\\&-\frac{2}{z_2}(c_2^X-c_2^Y)x_2(1-x_2)1_{\{x_2\in[0,1]\}}\Bigg]+\frac{\beta_2^X}{z_2}(1-x_2)-\frac{\beta_2^Y}{z_2}x_2.
\end{align}
The existence and uniqueness of a strong solution to \eqref{sde} will be provided in Proposition \ref{exis_uni_sde}.

We now sketch the construction of the asymmetric two-island frequency process,       $\mathbf{R}^{(\mathbf{z},\mathbf{r})}$. 
 As introduced in Section \ref{preliminiaries}, let $\mathbf{X}_t=(X_t^{1},X_t^{2})$ and $\mathbf{Y}_t=(Y_t^{1},Y_t^{2})$ be  two-dimensional Feller diffusions with continuous immigration that model the growth of the population size of type $\textbf{x}$ and $\textbf{y}$ individuals, respectively.  We will show in Section \ref{frequency_proc} that $\mathbf{R}^{(\mathbf{z},\mathbf{r})}$  can be constructed from the relative frequency processes of type $\mathbf{x}$ in each island, i.e. $\left(\frac{X_t^{1}}{X_t^{1}+Y_t^{1}}, \frac{X_t^{2}}{X_t^{2}+Y_t^{2}}\right)$, by means of the culling procedure. 

Each of the parameters can be traced to those of the associated two-dimensional Feller diffusions with immigration, and we will explain the relation between the parameters in Table \ref{table}. If the change of state parameters (i.e. $B^X_{ij}$, $B^Y_{ij}$ with $i\not=j$) and the variances $c^X_i,c^Y_i$ are different from zero only for $i,j=1,2$, we obtain a structured version of the Gillespie--Wright-Fisher diffusion introduced in \cite{Gill73,Gill74}. If $c_i^X=c_i^Y$ for $i=1,2$, we recover the two-island diffusion. Furthermore, if $c_1^X=c_1^Y$ and $c_2^X=c_2^Y=0$, the process specializes to the seedbank diffusion.

The main steps to prove that $\mathbf{R}^{(\mathbf{z},\mathbf{r})}$ arises from the relative frequency associated to two independent two-dimensional Feller diffusions with immigration as a limit when the sampling times become dense in the culling procedure are: first we describe the dynamics of the relative frequency process of type-$\textbf{x}$ individuals (when the total size of each island is allowed to vary in time) as the solution to a martingale problem;  we next show that the SDE associated to the process $\mathbf{R}^{(\mathbf{z},\mathbf{r})}$ has a unique strong solution, which is a Feller process. The previous steps allow us to reduce the convergence of the culling procedure to a mere convergence of infinitesimal generators, which is the third step.

\subsection{Moment duality for the asymmetric two-island frequency process}\label{duality-intro}
In Section \ref{duality} we show, under some conditions on the parameters of $\mathbf{R}^{(\mathbf{z},\mathbf{r})}$, that there exists a two-dimensional continuous-time Markov chain $\mathbf{N}^{(\mathbf{z},\mathbf{n})}:=\{\mathbf{N}^{(\mathbf{z},\mathbf{n})}_t=(N^{(\mathbf{z},\mathbf{n}),1}_t,N^{(\mathbf{z},\mathbf{n}),2}_t): t\geq0\}$ such that, for every $t>0$ and for every initial condition in their respective state spaces $\mathbf{n}=(n_1,n_2)$ and $\mathbf{r}=(r_1,r_2)$, the following relation holds:
$$
\E[r_1^{\mathbf{N}_t^{(\mathbf{z},\mathbf{n}),1}}r_2^{\mathbf{N}_t^{(\mathbf{z},\mathbf{n}),2}}|\mathbf{N}^{(\mathbf{z},\mathbf{n})}_0=(n_1,n_2)]=\E[(\mathbf{R}^{(\mathbf{z},\mathbf{r}),1})^{n_1}(\mathbf{R}^{(\mathbf{z},\mathbf{r}),2})^{n_2}|\mathbf{R}^{(\mathbf{z},\mathbf{r})}_0=(r_1,r_2)].
$$
For each $(i_1,i_2),(j_1,j_2)\in\mathbb{N}_0^2\cup\{\Delta\}$, and $\mathbf{z}=(z_1,z_2)\in(0,\infty)^2$ we define the following set of real numbers, which should be thought as transition rates
\begin{equation}
	q_{(i_1,i_2),(j_1,j_2)}^{\mathbf{z}}=
	\begin{cases}
		\displaystyle i_1\frac{\beta_1^Y}{z_1}+i_2\frac{\beta_2^Y}{z_2}&\mbox{if } (j_1,j_2)=\Delta,\\
		\displaystyle i_1s_1+i_2\frac{z_1}{z_2}(B_{21}^Y-B_{21}^X)+\frac{2}{z_1}\binom{i_1}{2}(c_1^X-c_1^Y)&\mbox{if } (j_1,j_2)=(i_1+1,i_2),\\
		\displaystyle i_2s_2+i_1\frac{z_2}{z_1}(B_{12}^Y-B_{12}^X)+\frac{2}{z_2}\binom{i_2}{2}(c_2^X-c_2^Y)&\mbox{if } (j_1,j_2)=(i_1,i_2+1),\\
		i_1B_{12}^X\displaystyle\frac{z_2}{z_1}&\mbox{if } (j_1,j_2)=(i_1-1,i_2+1),\\
		i_2B_{21}^X\displaystyle\frac{z_1}{z_2}&\mbox{if } (j_1,j_2)=(i_1+1,i_2-1),\\
		\displaystyle i_1\frac{\beta_1^X}{z_1}+\frac{2}{z_1}\binom{i_1}{2}c_1^X&\mbox{if } (j_1,j_2)=(i_1-1,i_2),\\
		\displaystyle i_2\frac{\beta_2^X}{z_2}+\frac{2}{z_2}\binom{i_2}{2}c_2^X&\mbox{if } (j_1,j_2)=(i_1,i_2-1).
	\end{cases}
\end{equation}
where $s_i=(B_{ii}^Y-B_{ii}^X)+\frac{2}{z_i}(c_i^X-c_i^Y)$ for $i=1,2$.

If $q^{\mathbf{z}}_{(i_1,i_2),(j_1,j_2)}\geq0$ for all $(i_1,i_2),(j_1,j_2)\in\mathbb{N}^2_0\cup\{\Delta\}$ the continuous-time Markov chain $\mathbf{N}^{(\mathbf{z},\mathbf{n})}$ has state space $\mathbb{N}_0^2\cup\{\Delta\}$, such that $\mathbf{N}^{(\mathbf{z},\mathbf{n})}_0=\mathbf{n}$ with $\mathbf{n}=(n_1,n_2)\in\mathbb{N}^2_0$ and generator given by $\mathcal{Q}^{\mathbf{z}}=\{q^{\mathbf{z}}_{(i_1,i_2),(j_1,j_2)}:(i_1,i_2),(j_1,j_2)\in\mathbb{N}_0^2\cup\{\Delta\}\}$.

This relation between the moments of $\mathbf{R}^{(\mathbf{z},\mathbf{r})}$ and the probability generating function of $\mathbf{N}^{(\mathbf{z},\mathbf{n})}$ is known as moment duality and constitutes a fundamental tool in population genetics. The most important example of this relation is the duality between the Wright--Fisher diffusion and the Kingman coalescent. Moment duality formally links the evolution of a population with its ancestry. 
There are several other examples, for instance the celebrated ancestral selection graph \cite{KN,KNG} gives a notion of extended ancestry for the Wright--Fisher diffusion with selection via moment duality. Additionally, duality has been used for the two-island model and the seedbank model in the presence of mutation in \cite{BBGW19}. In \cite{GPP}, the duality between the Gillespie--Wright--Fisher diffusion and a coalescing/pairwise-branching process was established and interpreted in terms of efficiency in \cite{GMP}.

The existence of a moment dual for the asymmetric two-island frequency process generalizes all the results in the previous paragraph. An interesting fact about this new duality result is that in the dual process the particles can produce offspring of both types. 

In Table \ref{table} we present a \textit{dictionary} that translates the different parameters of the Feller diffusions with immigration, into the parameters of the asymmetric two-island frequency processes and the parameters of its dual. This is useful to understand how the three families of processes are related. For the last column, we remark that the parameters of the moment dual of the asymmetric two-island frequency process must fulfill some conditions to ensure the existence of the process (see Theorem \ref{dual}). 

In the first two lines of Table \ref{table}, we see that the differences between the Malthusians associated to the Feller diffusions with immigration $\left((B_{ii}^X-B_{ii}^Y),\ i=1,2\right)$, become the coefficients of logistic drift terms in the asymmetric two-island frequency process associated to selection, and in the moment dual they become branching rates.

In lines 3-6 of Table \ref{table}, we can observe that the variances of each of the Feller diffusions have three different roles in the associated frequency process. The variances of the process $\mathbf{X}$ ($c_i^X$, $i=1,2$) dictate the strength of the random genetic drift for each island. The differences of the variances ($(c_i^Y-c_i^X)$, $i=1,2$), which are assumed to satisfy the condition $c_i^Y-c_i^X\leq 0$ for $i=1,2$, in order for the moment dual to exist, represent a frequency-dependent random genetic drift factor, and they also contribute with extra logistic drift terms in the asymmetric two-island frequency process. As expected, the random genetic drift is dual to Kingman's coalescence. Furthermore, for the dual process, the differences $c_i^X-c_i^Y$, $i=1,2$ determine the rate of pairwise branching and give an extra (individual) branching rate.

In lines 7-10, we consider the immigration terms of the Feller diffusion. These become drift terms that take each of the components of the asymmetric two-island frequency process away from its boundary points $\{0,1\}$. In the dual, they become death rates (the rate at which an individual is removed) and killing rates (the rate at which the process is absorbed in a cemetery state). 

Finally, in the last two lines, we have the parameters that govern the cross production of individuals, i.e. the way that individuals in island 1 produce individuals on island 2 and the other way around. In the frequency process, $\frac{z_2}{z_1}B_{12}^X$ ponderates a drift towards the identity, while the term $\frac{z_2}{z_1}(B_{12}^X -B_{12}^Y)$ represents a novel drift term that resembles selection but that depends on the frequency in both islands instead. In the moment dual the term $\frac{z_2}{z_1}B_{12}^X$ becomes the migration rate (i.e. the rate at which an individual goes from island 1 to island 2) and $\frac{z_2}{z_1}(B_{12}^Y-B_{12}^X)$ becomes a branching/migration term that determines the rate at which individuals in island 1 branch and allocate the newborn individual in island 2. The roles of $B_{21}^X$ and $B_{21}^Y$ are similar.

\begin{table}
	\begin{tabular}{ | m{1em} | m{3cm}| m{9.5cm} | m{3cm}| }
		\hline
		& Feller Diffusion with immigration & Asymmetric two-island frequency process & Dual process \\
		\hline
		\multirow{1}{5em}{1} & $B_{11}^X,B_{11}^X$ & $(B_{11}^X-B_{11}^Y)R^{({\mathbf{z}},{\mathbf{r}}),1}_t(1-R^{({\mathbf{z}},{\mathbf{r}}),1}_t)$ & $i_1(B_{11}^Y-B_{11}^X)$\\
		\hline
		\multirow{1}{5em}{2} & $B_{22}^X,B_{22}^X$ & $(B_{22}^X-B_{22}^Y)R^{({\mathbf{z}},{\mathbf{r}}),2}_t(1-R^{({\mathbf{z}},{\mathbf{r}}),2}_t)$ & $i_2(B_{22}^Y-B_{22}^X)$\\
		\hline
		\multirow{2}{5em}{3} & $c_{1}^X,c_{1}^Y$ & $\sqrt{\frac{c_1^X}{z_1}R^{({\mathbf{z}},{\mathbf{r}}),1}_t(1-R^{({\mathbf{z}},{\mathbf{r}}),1}_t)+\frac{(c_1^Y-c_1^X)}{z_1}(R^{({\mathbf{z}},{\mathbf{r}}),1}_t)^2(1-R^{({\mathbf{z}},{\mathbf{r}}),1}_t)}$ & $\frac{2}{z_1}\binom{i_1}{2}(c_1^X-c_1^Y)$\\
		& & & $\frac{2}{z_1}\binom{i_1}{2}c_1^X$\\
		\hline
		\multirow{1}{5em}{4} & $c_{1}^X,c_{1}^Y$ & $-2z_1^{-1}(c_1^X-c_1^Y)R^{({\mathbf{z}},{\mathbf{r}}),1}_t(1-R^{({\mathbf{z}},{\mathbf{r}}),1}_t)$ & $i_12z_1^{-1}(c_1^X-c_1^Y)$\\
		\hline
		\multirow{2}{5em}{5} & $c_{2}^X,c_{2}^Y$ & $\sqrt{\frac{c_2^X}{z_2}R^{({\mathbf{z}},{\mathbf{r}}),2}_t(1-R^{({\mathbf{z}},{\mathbf{r}}),2}_t)+\frac{(c_2^Y-c_2^X)}{z_2}(R^{({\mathbf{z}},{\mathbf{r}}),2}_t)^2(1-R^{({\mathbf{z}},{\mathbf{r}}),2}_t)}$ & $\frac{2}{z_2}\binom{i_2}{2}(c_2^X-c_2^Y)$\\
		& & & $\frac{2}{z_2}\binom{i_2}{2}c_2^X$\\
		\hline
		\multirow{1}{5em}{6} &$c_{2}^X,c_{2}^Y$ & $-2z_2^{-1}(c_2^X-c_2^Y)R^{({\mathbf{z}},{\mathbf{r}}),2}_t(1-R^{({\mathbf{z}},{\mathbf{r}}),2}_t)$ & $i_22z_2^{-1}(c_2^X-c_2^Y)$\\
		\hline
		\multirow{1}{5em}{7} & $\beta_1^X$ & $\beta_1^Xz_1^{-1}(1-R^{({\mathbf{z}},{\mathbf{r}}),1}_t)$ & $i_1z_1^{-1}\beta_1^X$\\
		\hline
		\multirow{1}{5em}{8} & $\beta_2^X$ & $\beta_2^Xz_2^{-1}(1-R^{({\mathbf{z}},{\mathbf{r}}),2}_t)$ & $i_2z_2^{-1}\beta_2^X$\\
		\hline
		\multirow{1}{5em}{9} & $\beta_1^Y$ & $-\beta_1^Yz_1^{-1}R^{({\mathbf{z}},{\mathbf{r}}),1}_t$ & $i_1z_1^{-1}\beta_1^Y$\\
		\hline
		\multirow{1}{5em}{10} & $\beta_2^Y$ & $-\beta_2^Yz_2^{-1}R^{({\mathbf{z}},{\mathbf{r}}),2}_t$ & $i_1z_2^{-1}\beta_2^Y$\\
		\hline
		\multirow{2}{5em}{11} & $B_{12}^X,B_{12}^Y$ & $\frac{z_2}{z_1}(B_{12}^X-B_{12}^Y)R^{({\mathbf{z}},{\mathbf{r}}),1}_t(1-R^{({\mathbf{z}},{\mathbf{r}}),2}_t)$ & $i_1\frac{z_2}{z_1}(B_{12}^Y-B_{12}^X)$\\
		& & $\frac{z_2}{z_1}B_{12}^X(R^{({\mathbf{z}},{\mathbf{r}}),2}_t-R^{({\mathbf{z}},{\mathbf{r}}),1}_t)$ &  $i_1\frac{z_2}{z_1}B_{12}^X$\\
		\hline
		\multirow{2}{5em}{12} & $B_{21}^X,B_{21}^Y$ & $\frac{z_1}{z_2}(B_{21}^X-B_{21}^Y)R^{({\mathbf{z}},{\mathbf{r}}),2}_t(1-R^{({\mathbf{z}},{\mathbf{r}}),1}_t)$ &  $i_2\frac{z_1}{z_2}(B_{21}^Y-B_{21}^X)$\\
		& & $\frac{z_1}{z_2}B_{21}^X(R^{({\mathbf{z}},{\mathbf{r}}),1}_t-R^{({\mathbf{z}},{\mathbf{r}}),2}_t)$ & $i_2\frac{z_1}{z_2}B_{21}^X$\\
		\hline
	\end{tabular}
	\caption{Relation between the parameters of the Feller diffusions with immigration and those of their associated asymmetric two-island frequency process as well as those of its moment dual.}
	\label{table}
\end{table}
\subsection{The large population deterministic limit}

To understand the expected long-time behaviour of the the asymmetric two-island frequency process, in Section \ref{asymp}, we will send the total mass of the population in each island, fixed at each of the sampling times, to infinity and obtain a system of ordinary differential equations. 
The resulting ODE is given by
\begin{align}\label{det_sys_0}
	\frac{d}{dt}r^{1}_t&=B^X_{12}(r^{2}_t-r^{1}_t)-(B^X_{12}-B^Y_{12})(r^{1}_tr^{2}_t-r^{1}_t)+(B_{11}^X-B_{11}^Y)r^{1}_t(1-r^{1}_t),\notag\\
	\frac{d}{dt}r^{2}_t&=B^X_{21}(r^{1}_t-r^{2}_t)-(B^X_{21}-B^Y_{21})(r^{1}_tr^{2}_t-r^{2}_t)+(B_{22}^X-B_{22}^Y)r^{2}_t(1-r^{2}_t), \qquad \text{ $t\geq0$}.
\end{align}
The parameter $B^X_{ij}$ is the continuous state equivalent to the rate, in the asssociated Feller diffusion with immigration $\textbf{X}$, at which type $\textbf{x}$ individuals in island $j$ produce offspring in island $i$ of type $\textbf{x}$. A negative $B^X_{ij}$ is only posible if $i=j$ and this is related to a tendency to produce less than one individual of type $\textbf{x}$. The description of the parameters $B^Y_{ij}$ associated with the process $\textbf{Y}$ is analogous.

The first term on each line pushes the solutions towards the identity line. The other terms depend on the differences between the reproduction rates of each type. Each of them pushes the solution towards one of the boundary points $\{(0,0), (1,1)\}$ which are associated with the event of fixation of one of the types in the population. However, at each point of the unit square, all the forces act simultaneously and complex behaviour arises. To predict which type of individuals go to fixation one needs to consider all of the parameters, and it is even possible that the system of ODEs points to an internal point, implying that this complex interaction can favor coexistence.

To visualize the complexity of the interaction between these parameters we invite the reader to visit the web page of Julio Nava (https://julionava.shinyapps.io/app$\_$prueba/), where it is possible to find a user-friendly interface that solves the system of ODEs numerically and presents different plots of interest. 

\subsection{The benefits of sustaining a seedbank}
Recall the interpretation of the islands $1$ and $2$ as active and dormant. In this section we study a toy model that gives some insight on the advantage for populations sustaining seedbanks. Here, $\textbf{x}$ sustains a seed bank, while $\textbf{y}$ does not. To be precise, in Equation \eqref{det_sys_0} we let $B_{12}^Y=B_{21}^Y=B_{22}^Y=0$ and  $B_{22}^X=-B_{12}^X$. Then, the system reduces to
\begin{align}\label{toymodel}
	\frac{d}{dt}r^{1}_t&=B^X_{12}r^{2}_t(1-r^{1}_t)+(B_{11}^X-B_{11}^Y)r^{1}_t(1-r^{1}_t),\notag\\
	\frac{d}{dt}r^{2}_t&=B^X_{21}r^{1}_t(1-r^{2}_t)-B^X_{12}r^{2}_t(1-r^{2}_t), \qquad \text{ $t\geq0$}.
\end{align}
Observe that the solution to the second equation is given by
\begin{align*}
	r^{2}_t&=1-(1-r^2_0)e^{-\int_0^t\left[B^X_{21}r^{1}_s-B^X_{12}r_s^2\right]ds}, \qquad \text{ $t\geq0$}.
\end{align*}
Now, as only type $\textbf{x}$ individuals produce seeds it is reasonable to asume that the frequency of $\textbf{x}$-type seeds at time zero is unity i.e. $r^2_0=1$. Note that this implies that 
\begin{equation}\label{r_t_1}
	r^2_t=1, \qquad\text{for all $t\geq 0$}.
\end{equation}
So if $r_0^2=1$, by substituting \eqref{r_t_1} in \eqref{toymodel} we obtain
\[
\frac{d}{dt}r^{1}_t=B^X_{12}(1-r^{1}_t)+(B_{11}^X-B_{11}^Y)r^{1}_t(1-r^{1}_t),\qquad t>0,
\]
which has a solution given by
\begin{align*}
	r^1_t=\frac{e^{(A+B)t}-CA}{CB+e^{(A+B)t}}, \qquad t\geq0.
\end{align*}
with $A=B^X_{12}$, $B=B_{11}^X-B_{11}^Y$ and $C=\frac{1-r_0^1}{A+Br_0^1}$.

It is now easy to see that if $B_{12}^X+B_{11}^X\geq B_{11}^Y$, then type-$\textbf{x}$ individuals have a strong selective advantage, meaning that
\[
\lim_{t\to\infty}r^1_t=1.
\]
On the other hand, if $B_{12}^X+B_{11}^X<B_{11}^Y$, then
\[
\lim_{t\to\infty}r^1_t=\frac{B^X_{12}}{(B_{11}^Y-B_{11}^X)}\in(0,1).
\]
Note that $B_{12}^X+B_{11}^X$ is the rate at which type $\mathbf{x}$ individuals produce offspring and seeds, of type $\mathbf{x}$. For this reason, we can say that if the population sustaining a seedbank reproduces faster, it will have a strong selective advantage, while if they are slower, the populations will coexist.  In other words, we observe that the seed producers never have a strong selective disadvantage: coexistence is their worst case scenario. 

	\section{Relative frequency and total population size}\label{pop_proc}
	As discussed in the previous section, we will distinguish between two different types $\textbf{x}$ and $\textbf{y}$ in the population distributed between two islands $1$ and $2$. The evolution of the total mass of individuals of type $\textbf{x}$ and $\textbf{y}$ in the island $i$ will be described by the Feller diffusions with immigration $X^i$ and $Y^i$ respectively, for $i=1,2$.
	
	We will now characterize the relative frequency of individuals of type $\textbf{x}$ and the total size of the population associated with each of the different islands. To this end, for each island $i=1,2$, let us consider the relative frequency process of individuals of type $\textbf{x}$ given by
	\begin{align}\label{rel_freq}
		R^i_t:=\frac{X^i_t}{X^i_t+Y^i_t}1_{\{t\leq  \tau^i\}}+R^i_{\tau} 1_{\{t>  \tau^i\}}
		\qquad\text{$t\geq0$},\qquad
		R_0^i=r^i,
	\end{align}
where $\tau^i=\inf\{t\geq0: X^{i}Y^{i}=0\}$ and $r^i=x^{i}/(x^{i}+y^{i})$.

	To obtain a complete description of the dynamics of the population we will also consider, for each $i=1,2$, the total population size of the island $i$ defined by
	\begin{align}\label{pop_size}
		Z^i_t:=X^i_t+Y^i_t,\qquad t\geq0, \qquad Z^i_0=z^i,
	\end{align}
	where $z^i=x^i+y^i$.
	
	In the next result we will obtain the dynamics of the process $(\mathbf{R},\mathbf{Z}):=(R^1,R^2,Z^1,Z^2)$ defined by \eqref{rel_freq} and \eqref{pop_size}. We note that the process $(\mathbf{R},\mathbf{Z})$ is a Markov process, and hence we will describe its evolution through the solution to a martingale problem in the next result. We denote the law of the process $(\mathbf{R},\mathbf{Z})$ starting from the initial position $(\mathbf{r},\mathbf{z})=(r^1,r^2,z^1,z^2)\in[0,1]^2\times(0,\infty)^2$ by $\mathbb{P}_{(\mathbf{r},\mathbf{z})}$. Accordingly, we denote the associated expectation operator by $\E_{(r,z)}$.
	
	For $\varepsilon\in(0,z^1\wedge z^2)$ and $L>|\mathbf{z}|$ we denote by $\tau_{\varepsilon}:=\inf\{t>0:Z^1_t\wedge Z^2_t=\varepsilon\}$ and $\tau_L:=\inf\{t>0:Z^1_t\vee Z^2_t=L\}$ to the first hitting times of the level $\varepsilon$ (resp. $L$) for the process $Z^1\wedge Z^2$ (resp. $Z^1\vee Z^2$). We remark that the process $(\mathbf{R},\mathbf{Z})$ stopped at the stopping time $\tau=\tau_{\varepsilon}\wedge\tau_L$ completely encodes the dynamics of the two subpopulations of the types $\textbf{x},\textbf{y}$ in the different islands, originally described by the processes $\mathbf{X}$ and $\mathbf{Y}$, before the size of the population in any of the islands becomes relatively small or large.
	
	The next result describes the dynamics of the process $(\mathbf{R},\mathbf{Z})$ until the stopping time $\tau$ as the solution to a martingale problem. We defer its proof to Appendix \ref{mart_prob_proof}.
	
	\begin{lemma}\label{mart_prob}
		For any $f\in\mathcal{C}^2([0,1]^2\times\R_+^2)$ the process
		\begin{align*}
			f(R^1_{t\wedge\tau},R^2_{t\wedge\tau},Z^1_{t\wedge\tau},Z^2_{t\wedge\tau})-f(r^1,r^2,z^1,z^2)-\int_0^{t\wedge\tau}\mathcal{L}f(R^1_s,R^2_s,Z^1_s,Z^2_s)ds,\qquad t\geq 0,
		\end{align*}
		is a local martingale where
		\begin{align}\label{gen_1}
			\mathcal{L}f(r_1,r_2,z_1,z_2)=\sum_{i=1}^3B^i(r_1,r_2,z_1,z_2),
		\end{align}
	and $B^i$ is given in \eqref{b_1}-\eqref{b_3} for $i=1,\dots,3$. 
\end{lemma}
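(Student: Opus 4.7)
The plan is to compute $d(f(\mathbf{R}_t,\mathbf{Z}_t))$ up to the stopping time $\tau$ by applying It\^o's formula to the transformed process, and then to read off the generator. The key observation is that on the event $\{t\le \tau\}$ we have $\varepsilon \le Z^i_t \le L$ for $i=1,2$, so the map
\[
\phi:(x^1,x^2,y^1,y^2)\longmapsto \Bigl(\tfrac{x^1}{x^1+y^1},\,\tfrac{x^2}{x^2+y^2},\,x^1+y^1,\,x^2+y^2\Bigr)
\]
is smooth on the (open) region $\{x^i+y^i\in(\varepsilon/2,2L),\,x^i,y^i\ge 0\}$, and hence $f\circ\phi\in\mathcal{C}^2$ on the same region.

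First I would apply It\^o's formula to $f\circ\phi$ along the four-dimensional diffusion $(\mathbf{X},\mathbf{Y})$ solving \eqref{sde_X}--\eqref{sde_Y}, stopped at $\tau$. Equivalently (and this is the bookkeeping I would actually carry out), I would compute $d Z^i_t$ and $d R^i_t$ separately by It\^o and then invoke the two-dimensional It\^o formula for $f$. The $Z^i$-differentials are immediate from linearity: $dZ^i_t$ has drift $\beta^X_i+(B^X\mathbf{X}_t)_i+\beta^Y_i+(B^Y\mathbf{Y}_t)_i$ and two orthogonal martingale terms $\sqrt{c^X_iX^i_t}\,dW^{X,i}_t$ and $\sqrt{c^Y_iY^i_t}\,dW^{Y,i}_t$. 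For $R^i_t=X^i_t/Z^i_t$, using $\partial_{x^i}R^i=(1-R^i)/Z^i$, $\partial_{y^i}R^i=-R^i/Z^i$ and the corresponding second derivatives together with the independence of $\mathbf{W}^X$ and $\mathbf{W}^Y$, the drift becomes
\[
\tfrac{1-R^i_t}{Z^i_t}\bigl(\beta^X_i+(B^X\mathbf{X}_t)_i\bigr)-\tfrac{R^i_t}{Z^i_t}\bigl(\beta^Y_i+(B^Y\mathbf{Y}_t)_i\bigr)+\tfrac{R^i_t(1-R^i_t)}{Z^i_t}\bigl(c^Y_i-c^X_i\bigr),
\]
while the martingale part is $\tfrac{1-R^i_t}{Z^i_t}\sqrt{c^X_iX^i_t}\,dW^{X,i}_t-\tfrac{R^i_t}{Z^i_t}\sqrt{c^Y_iY^i_t}\,dW^{Y,i}_t$. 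I would then substitute $X^i_t=R^i_tZ^i_t$ and $Y^i_t=(1-R^i_t)Z^i_t$ everywhere, so that the drift and diffusion coefficients of $(\mathbf{R},\mathbf{Z})$ become genuine functions of $(\mathbf{R}_t,\mathbf{Z}_t)$ only (confirming, in particular, the Markov property of $(\mathbf{R},\mathbf{Z})$).

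Once these expressions are in hand, a second application of It\^o's formula, now to $f(R^1_t,R^2_t,Z^1_t,Z^2_t)$, yields
\[
df(\mathbf{R}_t,\mathbf{Z}_t)=(\mathcal{L}f)(\mathbf{R}_t,\mathbf{Z}_t)\,dt+dM_t,\qquad t\le\tau,
\]
where $dM_t$ collects the four mutually orthogonal stochastic integrals against $W^{X,1},W^{X,2},W^{Y,1},W^{Y,2}$. Stopped at $\tau$, their integrands are bounded (because $Z^i$ stays in $[\varepsilon,L]$ and $R^i\in[0,1]$), so $M_{\cdot\wedge\tau}$ is a true martingale, and the claimed local-martingale property follows. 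The decomposition $\mathcal{L}f=\sum_{i=1}^{3}B^i$ is then just a convenient organisation of the resulting drift terms into (i) the first-order ``transport'' terms involving $\beta^X,\beta^Y,B^X,B^Y$, (ii) the second-order contribution of the $\mathbf{W}^X$ noise, and (iii) the second-order contribution of the $\mathbf{W}^Y$ noise.

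The only real obstacle is algebraic: organising the many terms produced by the quotient rule and the second-order It\^o corrections so that they combine cleanly into $\mathcal{L}f$ in the three stated blocks, and checking carefully that all cross-derivative terms of the form $\partial_{r_i}\partial_{z_i}f$ pick up the correct factors of $R^i$ and $1-R^i$ (from the coupling between $R^i$ and $Z^i$ through their shared driving Brownian motions $W^{X,i},W^{Y,i}$). There is no genuine analytic difficulty: smoothness of $\phi$ on the stopped region and independence of $\mathbf{W}^X,\mathbf{W}^Y$ reduce everything to a direct application of It\^o's formula.
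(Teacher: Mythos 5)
Your proposal is correct and follows essentially the same route as the paper: apply It\^o's formula to $f\circ g$ with $g(x_1,y_1,x_2,y_2)=\bigl(\tfrac{x_1}{x_1+y_1},\tfrac{x_2}{x_2+y_2},x_1+y_1,x_2+y_2\bigr)$ along the stopped four-dimensional diffusion $(\mathbf{X},\mathbf{Y})$, then substitute $X^i=R^iZ^i$, $Y^i=(1-R^i)Z^i$ to express the drift as a function of $(\mathbf{R},\mathbf{Z})$ and identify the three blocks $B^1,B^2,B^3$ (first-order terms, $\mathbf{W}^X$ second-order contribution, $\mathbf{W}^Y$ second-order contribution). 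Your intermediate computations of $dR^i$ and $dZ^i$, including the It\^o correction $R^i(1-R^i)(c^Y_i-c^X_i)/Z^i$, agree with what the paper's $B^1$--$B^3$ encode, so the argument is sound.
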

	\section{Asymmetric two-island frequency process and culling of general two-island models.}\label{frequency_proc}
	In the previous section, we described the dynamics of the process $(\mathbf{R},\mathbf{Z})=(R^1,R^2,Z^1,Z^2)$, stopped at the stopping time $\tau$, as the solution to a martingale problem. 
	In this section, we will construct the asymmetric frequency process $\mathbf{R}^{(\mathbf{z},\mathbf{r})}$, introduced in Section \ref{ATIFM},  
	from the process $(R^1,R^2,Z^1,Z^2)$ by a sampling/culling procedure similar to the one introduced in \cite{CGP}.
	
	We start this section by showing that the process $\mathbf{R}^{(z,r)}$ is well-defined as the unique strong solution to \eqref{sde}, the proof of this result is defered to Appendix \ref{unique}.
	\begin{proposition}\label{exis_uni_sde}
		There exists a unique strong solution $\mathbf{R}^{(\mathbf{z},\mathbf{r})}$ to \eqref{sde} such that $\mathbf{R}^{(\mathbf{z},\mathbf{r})}\in[0,1]^2$ for all $t\geq0$ $\mathbb{P}$-a.s. Furthermore, for any $t>0$ there exists $C(t)>0$ such that 
		\begin{align}\label{est_1}
			\E\left[\|\mathbf{R}^{(\mathbf{z},\mathbf{r})}_t-\mathbf{R}^{(\mathbf{z},\overline{\mathbf{r}})}_t\|\right]\leq C(t)\|\mathbf{r}-\overline{\mathbf{r}}\|,\qquad \mathbf{r},\overline{\mathbf{r}}\in[0,1]^2.
		\end{align}
	\end{proposition}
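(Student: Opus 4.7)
The plan is to apply the Yamada-Watanabe machinery, exploiting the diagonal structure of the noise in \eqref{sde}, and to complement it with a boundary analysis establishing invariance of $[0,1]^2$. Two structural features drive the argument. First, the drift $(b^1,b^2)$ in \eqref{b_1_a}-\eqref{b_2_a} is continuous on $\R^2$ (each indicator cutoff multiplies a term that vanishes on the relevant boundary plane) and Lipschitz on $[0,1]^2$. Second, the diffusion matrix is diagonal with the $i$-th entry $\sigma_i(r_i)=\sqrt{\frac{c_i^X}{z_i}r_i(1-r_i)^2+\frac{c_i^Y}{z_i}r_i^2(1-r_i)}$ depending only on the $i$-th coordinate and being $\tfrac12$-H\"older continuous on $[0,1]$. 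The latter rules out the standard contraction argument, while the former supplies precisely the Lipschitz drift control needed by Yamada-Watanabe.

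First I would establish weak existence on $\R^2$ via a Skorokhod-type argument, which only needs continuity and at most linear growth of the coefficients; both are readily checked. Next, I would prove pathwise uniqueness. Let $\mathbf{R}$ and $\tilde{\mathbf{R}}$ be two strong solutions driven by the same Brownian motion from the same starting point, and let $\{\phi_n\}$ be the standard Yamada-Watanabe approximation of $|\cdot|$, i.e. $\phi_n\uparrow|\cdot|$, $|\phi_n'|\leq 1$, $\phi_n''\geq 0$ and $\int\phi_n''(x)\,x^2\,dx\to 0$. Because the noise is diagonal, one can apply It\^o's formula to $\phi_n(R^i_t-\tilde R^i_t)$ for $i=1,2$ separately: the diffusion contribution vanishes in the limit $n\to\infty$ by the $\tfrac12$-H\"older bound on $\sigma_i$, while the drift contribution is controlled by $L\,\E\|\mathbf{R}_s-\tilde{\mathbf{R}}_s\|_1$ with $L$ the Lipschitz constant of $b$ on $[0,1]^2$. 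Summing over $i$, taking expectations after localizing by a suitable sequence of stopping times, and applying Gr\"onwall's inequality yields pathwise uniqueness. The Yamada-Watanabe theorem then promotes weak existence and pathwise uniqueness to the existence of a unique strong solution.

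Third, invariance of $[0,1]^2$ follows by boundary inspection of \eqref{b_1_a}-\eqref{b_2_a}: at $r_i=0$ only the non-negative immigration and in-migration terms survive, so $b^i\geq 0$; at $r_i=1$ only non-positive terms remain, so $b^i\leq 0$; and $\sigma_i$ vanishes at $\{0,1\}$. A standard hitting-time argument (or comparison with the boundary-absorbed diffusion) confirms that the process started in $[0,1]^2$ never exits, so the indicators in \eqref{sde} are inoperative on the effective support. For the Lipschitz bound \eqref{est_1}, I would rerun the Yamada-Watanabe computation comparing $\mathbf{R}^{(\mathbf{z},\mathbf{r})}$ and $\mathbf{R}^{(\mathbf{z},\overline{\mathbf{r}})}$ driven by the same Brownian motion, so that the initial gap $\|\mathbf{r}-\overline{\mathbf{r}}\|$ enters the integral inequality; Gr\"onwall then delivers \eqref{est_1} with $C(t)=e^{Lt}$.

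The main obstacle is the non-Lipschitz diffusion coefficient, which forces us to abandon the direct $L^2$ Gr\"onwall estimate in favor of the $L^1$ bound \eqref{est_1} produced by the Yamada-Watanabe approximation; the argument is only viable here because the noise matrix is diagonal with each $\sigma_i$ depending solely on $R^i$, so the approximation can be carried out componentwise. A secondary sequencing point is that the Lipschitz constant of $b$ is finite only on $[0,1]^2$, so Steps 1--2 must strictly speaking be run on a process stopped upon exiting a compact set and then combined with the invariance established afterwards; this is a routine localization rather than a genuine difficulty.
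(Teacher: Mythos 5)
Your proposal is correct and follows essentially the same route as the paper's proof in Appendix A: boundary sign inspection of the drift plus vanishing of the diffusion coefficient outside $[0,1]^2$ for invariance (the paper cites Proposition 2.1 of \cite{FL}), weak existence from continuity and linear growth via \cite{IW}, pathwise uniqueness from the Lipschitz drift and componentwise $\tfrac12$-H\"older diagonal diffusion, and a Gr\"onwall argument in $L^1$ for \eqref{est_1}. The only difference is presentational: you unpack the Yamada--Watanabe approximation explicitly, whereas the paper invokes the multidimensional Yamada--Watanabe theorem of \cite{GM} directly.
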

	\begin{remark}
	We note that in the particular case in which the Feller diffusions with immigration $\mathbf{X}$ and $\mathbf{Y}$ have the same distribution then, for $t>0$, the SDE given in \eqref{sde} takes the following form 
	\begin{align*}
		dR^{(z,r),1}_t&=\left[B_{12}^X\frac{z_2}{z_1}(R^{(z,r),2}_t-R^{(z,r),1}_t)+\frac{\beta_1^X}{z_1}(1-R^{(z,r),1}_t)-\frac{\beta_1^X}{z_1}R^{(z,r),1}_t\right]dt+\sqrt{\frac{c_1^X}{z_1}R^{(z,r),1}_t(1-R^{(z,r),1}_t)}dW^1_t\\
		dR^{(z,r),2}_t&=\left[B_{21}^X\frac{z_1}{z_2}(R^{(z,r),1}_t-R^{(z,r),2}_t)+\frac{\beta_2^X}{z_2}(1-R^{(z,r),2}_t)-\frac{\beta_2^X}{z_2}R^{(z,r),2}_t\right]dt+\sqrt{\frac{c_2^X}{z_2}R^{(z,r),2}_t(1-R^{(z,r),2}_t)}dW^2_t,
	\end{align*}
	with $\mathbf{R}^{(z,r)}_0=(r_1,r_2)$.
	And hence, we recover the general two-island model as in Remark 1.2 of \cite{BBGW19}.
\end{remark}
	In the next result we show that the process $\mathbf{R}^{(\mathbf{z},\mathbf{r})}$ is Feller and we characterize the process by computing its infinitesimal generator, we defer the proof to Appendix \ref{generator}.
	\begin{proposition}\label{gen_culling}
		For any $\mathbf{z}=(z_1,z_2)\in(0,\infty)^2$, $\mathbf{R}^{(\mathbf{z},\mathbf{r})}$ is a Feller process and its infinitesimal generator is given for any $f\in\mathcal{C}^2([0,1]^2)$ by
		\begin{align}\label{fun_d}
			\mathcal{L}^{\mathbf{z}}f(r_1,r_2):=\sum_{i=1}^2D^{i}_f(r_1,r_2,z_1,z_2),\qquad \text{for $\mathbf{r}=(r_1,r_2)\in[0,1]^2$},
		\end{align}
		where 
		\begin{align*}
			D^1_f(r_1,&r_2,z_1,z_2)\notag\\:&=\partial_1f(\mathbf{r})(1-r_1)\left(B_{11}^Xr_1+B^X_{12}\frac{z_2}{z_1}r_2+\frac{\beta_1^X}{z_1}\right)-\partial_1f(\mathbf{r})r_1\left(B_{11}^Y(1-r_1)+B^Y_{12}\frac{z_2}{z_1}(1-r_2)+\frac{\beta_1^Y}{z_1}\right)\notag\\
			&+\partial_2f(\mathbf{r})(1-r_2)\left(B_{21}^X\frac{z_1}{z_2}r_1+B^X_{22}r_2+\frac{\beta_2^X}{z_2}\right)-\partial_2f(\mathbf{r})r_2\left(B_{21}^Y\frac{z_1}{z_2}(1-r_1)+B^Y_{22}(1-r_2)+\frac{\beta_2^Y}{z_2}\right),\notag\\
			D^2_f(r_1,&r_2,z_1,z_2)\notag\\:&=\frac{c_1^X}{z_1}\left(r_1(1-r_1)^2\partial^2_{11}f(\mathbf{r})-2r_1(1-r_1)\partial_1f(\mathbf{r})\right)+\frac{c_2^X}{z_2}\left(r_2(1-r_2)^2\partial^2_{22}f(\mathbf{r})-2r_2(1-r_2)\partial_2f(\mathbf{r})\right)\notag\\
			&+\frac{c_1^Y}{z_1}\left(r_1^2(1-r_1)\partial^2_{11}f(\mathbf{r})+2r_1(1-r_1)\partial_1f(\mathbf{r})\right)+\frac{c_2^Y}{z_2}\left(r_2^2(1-r_2)\partial^2_{22}f(\mathbf{r})+2r_2(1-r_2)\partial_2f(\mathbf{r})\right).
		\end{align*}
	\end{proposition}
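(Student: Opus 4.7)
The plan is the standard two-step argument for SDE-defined Markov processes: first establish the Feller property from the continuous-dependence estimate in Proposition \ref{exis_uni_sde}, and then identify the infinitesimal generator on $\mathcal{C}^2([0,1]^2)$ by applying It\^o's formula to $f(\mathbf{R}^{(\mathbf{z},\mathbf{r})}_t)$ using \eqref{sde}, \eqref{b_1_a} and \eqref{b_2_a}.

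For the Feller property, since $[0,1]^2$ is compact it suffices to show that the semigroup $P^{\mathbf{z}}_t f(\mathbf{r}) := \E[f(\mathbf{R}^{(\mathbf{z},\mathbf{r})}_t)]$ sends $\mathcal{C}([0,1]^2)$ into itself and is strongly continuous at $t=0^+$. For Lipschitz $f$, continuity of $P^{\mathbf{z}}_t f$ is immediate from \eqref{est_1},
\[
|P^{\mathbf{z}}_t f(\mathbf{r}) - P^{\mathbf{z}}_t f(\overline{\mathbf{r}})| \leq \mathrm{Lip}(f)\, C(t)\, \|\mathbf{r}-\overline{\mathbf{r}}\|,
\]
and extension to arbitrary $f \in \mathcal{C}([0,1]^2)$ follows by uniform approximation by Lipschitz functions (e.g.\ via Stone--Weierstrass, or by mollification). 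Strong continuity at $0^+$ combines the uniform continuity of $f$ on the compact state space with the bound $\E\|\mathbf{R}^{(\mathbf{z},\mathbf{r})}_t - \mathbf{r}\| \to 0$ uniformly in $\mathbf{r}$, which follows directly from \eqref{sde} via Burkholder--Davis--Gundy applied to the drift and diffusion integrands, all of which are bounded on $[0,1]^2$.

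To identify $\mathcal{L}^{\mathbf{z}}$, I apply It\^o's formula (Theorem 3.3 in \cite{RY}) to $f(\mathbf{R}^{(\mathbf{z},\mathbf{r})}_t)$ for $f \in \mathcal{C}^2([0,1]^2)$. The drift contribution is $b^1(\mathbf{r})\partial_1 f + b^2(\mathbf{r})\partial_2 f$; expanding $b^1, b^2$ via \eqref{b_1_a}--\eqref{b_2_a} one recovers the first two lines of $D^1_f$ (the $B^{X,Y}_{ij}$- and $\beta^{X,Y}_i$-terms), while the residual $-\tfrac{2}{z_i}(c_i^X - c_i^Y)r_i(1-r_i)\partial_i f$ coming from $b^i$ is exactly the first-order piece appearing inside $D^2_f$. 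The quadratic-variation contribution produces the second-order part $\bigl(\tfrac{c_1^X}{z_1}r_1(1-r_1)^2 + \tfrac{c_1^Y}{z_1}r_1^2(1-r_1)\bigr)\partial^2_{11}f$ and its $\partial^2_{22}f$ analogue, under the same normalisation convention used in Lemma \ref{mart_prob}. Taking expectation in the It\^o expansion---the stochastic integrals are genuine martingales since their integrands are bounded on the compact invariant set $[0,1]^2$---then dividing by $t$ and letting $t \downarrow 0^+$, bounded convergence together with continuity of the integrand identifies
\[
\lim_{t \downarrow 0^+}\frac{P^{\mathbf{z}}_t f(\mathbf{r}) - f(\mathbf{r})}{t} = D^1_f(r_1,r_2,z_1,z_2) + D^2_f(r_1,r_2,z_1,z_2).
\]

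The only nontrivial task is the algebraic bookkeeping that rearranges $b^1\partial_1 f + b^2\partial_2 f$ plus the quadratic-variation terms into the particular split $D^1_f + D^2_f$ announced in the statement, where the $(c_i^X-c_i^Y)/z_i$-terms are grouped with the diffusion-induced second derivatives rather than with the drift-only first-order terms. This mirrors the manipulation already performed in the proof of Lemma \ref{mart_prob} (specialised to the constant-total-mass setting) and presents no real analytic obstacle because the coefficients of \eqref{sde} are bounded and continuous on $[0,1]^2$, making the relevant moment bounds and limit exchanges automatic.
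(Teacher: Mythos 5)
Your proof is correct and follows essentially the same route as the paper's: continuity of the semigroup via the estimate \eqref{est_1} plus uniform approximation (you use Lipschitz functions where the paper uses $\mathcal{C}^1$ ones), strong continuity at $0^+$, and identification of the generator by applying It\^o's formula, taking expectations and letting $t\downarrow 0$. The only point the paper adds is an appeal to Theorem 1.33 in \cite{Sch} to upgrade the pointwise limit $(P^{\mathbf{z}}_tf-f)/t\to\mathcal{L}^{\mathbf{z}}f$ into membership of $f$ in the domain of the strong generator; you should include that (or an equivalent Dynkin-type argument) as well.
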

	\subsection{Culling of the population process}
	In Section \ref{pop_proc} we obtained the evolution of the process \linebreak $(\mathbf{R},\mathbf{Z})=(R^1,R^2,Z^1,Z^2)$ as the solution to a martingale problem. We recall that for each $i=1,2$, the component $R^i$ represents the frequency of individuals of type $\textbf{x}$ in the $i$-th island, while $Z^i$ describes its total population size. 
	
	As in many population models we are interested in maintaining the total size of the population constant while allowing the frequency processes of individuals of type $\textbf{x}$ associated to each island, $(R^1,R^2)$, to evolve randomly; obtaining a two-dimensional stochastic process. Inspired by the work of Gillespie \cite{Gill73} (see also Section 4.2 in \cite{CGP}) we will use a sampling method called $\textit{culling}$, in which at each sampling time we take the total size of the population in each island back to a constant level. This method will allow us to obtain a random description of the frequency processes under the assumption that the total size of the population in each island is constant.
	
	Let us consider fixed population sizes $z_1,z_2>0$ for the first and second islands, respectively. For $\mathbf{z}=(z_1,z_2)$,  consider a sequence of pseudo-Poisson processes $\left\{ \overline{\mathbf{R}}^{(\mathbf{z},n)}_t=\left(\overline{R}^{(\mathbf{z},n),1}_t,\overline{R}^{(\mathbf{z},n),2}_t\right): t\geq0\right\}_{n\geq 1}$ (see Chapter 10 in \cite{Ka} for a detailed introduction to pseudo-Poisson processes). We denote the law 
	of the process $\overline{\mathbf{R}}^{(\mathbf{z},n)}$ by $\mathbf{P}_{\mathbf{r}}$ when started at the state $\mathbf{r}=(r_1,r_2)\in[0,1]^2$. For each fixed $n\geq1$, the Markov process $\overline{\mathbf{R}}^{(\mathbf{z},n)}$ has jump times $(T^{m,n})_{m\geq1}$ given by independent exponential random variables with rate $n$, and transition kernel $\{\kappa^{(z,n)}:n\geq 1\}$ defined for $\mathbf{y}\in[0,1]^2$, and $A\in\mathcal{B}([0,1]^2)$ by
	\begin{align}\label{kernel_cull}
		\kappa^{(z,n)}(\mathbf{y},A)=\mathbf{P}_{\mathbf{y}}\left(\overline{R}^{(z,n)}_{T^{1,n}}\in A\right)=\mathbb{P}_{(y_1,y_2,z_1,z_2)}\left(\left(R^{1}_{\frac{1}{n}\wedge\tau},R^{2}_{\frac{1}{n}\wedge\tau},Z^{1}_{\frac{1}{n}\wedge\tau},Z^{2}_{\frac{1}{n}\wedge\tau}\right)\in A\times\R_+^2\right),
	\end{align}
	where $\tau=\tau_{\varepsilon}\wedge\tau_L$ is defined in Section \ref{pop_proc}.
	
	By the proof of Proposition 17.28 in \cite{Ka} the process $\overline{\mathbf{R}}^{(\mathbf{z},n)}$ is Feller and has an infinitesimal generator given for any $f\in\mathcal{C}([0,1]^2)$ by
	\begin{align}\label{discrete_gen}
		\overline{\mathcal{L}}^{(\mathbf{z},n)}f(\mathbf{r})=n\int_{[0,1]^2}(f(\mathbf{u})-f(\mathbf{r}))\kappa^{(z,n)}(\mathbf{r},\ud\mathbf{u}),\qquad \mathbf{r}\in[0,1]^2.
	\end{align}
	
	For each $n\geq 1$, we can think of $\overline{\mathbf{R}}^{(\mathbf{z},n)}_{T^{1,n}}$ as a sampling of the first two coordinates of the process \linebreak $(R^1_{t\wedge\tau},R^2_{t\wedge\tau},Z^1_{t\wedge\tau},Z^2_{t\wedge\tau})_{t\geq0}$ started at the position $(\mathbf{r},\mathbf{z})\in[0,1]^2\times\R_+^2$ at time $t=\frac{1}{n}$. Then, in order to define $\overline{\mathbf{R}}^{(\mathbf{z},n)}_{T^{2,n}}$, we use the fact that the process $(R^1,R^2,Z^1,Z^2)$ is an homogenouos Markov process, to restart the process $(R^1_{t\wedge\tau},R^2_{t\wedge\tau},Z^1_{t\wedge\tau},Z^2_{t\wedge\tau})_{t\geq0}$ at the initial position $(R^1_{\frac{1}{n}\wedge \tau},R^2_{\frac{1}{n}\wedge \tau},z_1,z_2)$ and sample the process at time $t=\frac{1}{n}$. By proceeding with this algorithm we can define the process $\overline{\mathbf{R}}^{(\mathbf{z},n)}_t$ for any $t>0$.
	
	Under this construction the process $\overline{\mathbf{R}}^{(\mathbf{z},n)}$ evolves in the time interval $[T^{m,n},T^{m+1,n})$  as the first two coordinates of the process $(R^1,R^2,Z^1,Z^2)$, on the interval $[0,\frac{1}{n}\wedge\tau)$, started from the initial position \linebreak $(\overline{\mathbf{R}}^{(\mathbf{z},n),1}_{T^{m,n}},\overline{\mathbf{R}}^{(\mathbf{z},n),2}_{T^{m,n}},z_1,z_2)$. Therefore, the process $\overline{\mathbf{R}}^{(\mathbf{z},n)}$ behaves like the first two coordinates of the process $(R^1,R^2,Z^1,Z^2)$ but with the fluctuations of the total size of the population at each island $(Z^1,Z^2)$, around $(z_1,z_2)$, becoming smaller as $n\to\infty$, i.e. as the sampling times converge to zero.
	
	We denote by $\mathbb{D}([0,T],[0,1]^2)$ the space of functions with cadlag paths on $[0,T]$ taking values on $[0,1]^2$ endowed with the Skorohod topology.
	In the next result we show that the sequence of processes $\{(\overline{\mathbf{R}}^{(\mathbf{z},n)})_{t\geq0}:n\geq 1\}$ converges weakly in the space $\mathbb{D}([0,T],[0,1]^2)$ for any $T>0$, to the process $\mathbf{R}^{(\mathbf{z},\mathbf{r})}$, which is the unique solution to the stochastic differential equation given in \eqref{sde}. By this convergence and the previous construction we can understand the process $\mathbf{R}^{(\mathbf{z},\mathbf{r})}$ as describing the evolution of the frequency of individuals of type $\textbf{x}$ in each island but under the assumption that the total size of the population in the $i$-th island is constant and given by $z_i>0$ for each $i=1,2$.
	
	\begin{theorem}
		Assume that $\overline{\mathbf{R}}^{(\mathbf{z},n)}$ is started from $\mathbf{r}\in[0,1]^2$ for every $n\geq1$. Then, for any fixed $\mathbf{z}=(z_1,z_2)\in(0,\infty)^2$ and $T>0$, $\overline{\mathbf{R}}^{(\mathbf{z},n)}\to \mathbf{R}^{(\mathbf{z},\mathbf{r})}$ as $n\to\infty$ weakly in $\mathbb{D}([0,T],[0,1]^2)$.
	\end{theorem}
	\begin{proof}
		We note that for each $n\geq 1$, the process $\overline{\mathbf{R}}^{(\mathbf{z},n)}$ takes values in the set $[0,1]^2$, which is compact, and the same holds true for the process $\mathbf{R}^{(\mathbf{z},\mathbf{r})}$, then following Theorem 17.28 in \cite{Ka} we only need to verify that for any $f\in\mathcal{C}^2([0,1]^2)$	
		\begin{align*}
			\overline{\mathcal{L}}^{(\mathbf{z},n)}f\to \mathcal{L}^{\mathbf{z}}f, \qquad \text{strongly as $n\to\infty$.}
		\end{align*}
		Using \eqref{kernel_cull} together with \eqref{discrete_gen} we can write for each $n\geq 1$
		\begin{align*}
			\overline{\mathcal{L}}^{(\mathbf{z},n)}f(\mathbf{r})=n\left[\E_{(\mathbf{r},\mathbf{z})}\left[f(\mathbf{R}_{\frac{1}{n}\wedge \tau})\right]-f(\mathbf{r})\right], \qquad\text{$\mathbf{r}\in[0,1]^2$.}
		\end{align*}
		On the other hand, by noting that the function $f$ only depends on the first two coordinates $(x_1,x_2)$ of any point $(x_1,x_2,z_1,z_2)\in[0,1]^2\times(0,\infty)^2$ we have by Lemma \ref{mart_prob} 
		\begin{align}\label{mart_gen}
			f(\mathbf{R}_{n^{-1}\wedge \tau})=f(\mathbf{r})+\int_{0}^{n^{-1}\wedge \tau}\sum_{i=1}^2D^i_f(R^1_s,R^2_s,Z^1_s,Z^2_s)ds+M_{n^{-1}\wedge\tau},
		\end{align}
		where for $i=1,2$, $D^i_f$ is defined in Proposition \ref{gen_culling} and $M$ is a local martingale.
		
		Let us define $B(\varepsilon,L):=\{(x_1,x_2)\in\R_+^2: \varepsilon\leq x_i\leq L, \ i=1,2\}$. Then using the fact that $f\in\mathcal{C}^2([0,1]^2)$ and that $(R^1_t,R^2_t,Z^1_t,Z^2_t)\in[0,1]^2\times B(\varepsilon,L)$ for $t\in[0,\tau]$, we can find a constant $C>0$ such that
		\begin{align}\label{bound_con}
			\left|\sum_{i=1}^2D^i_f(R^1_t,R^2_t,Z^1_t,Z^2_t)\right|\leq C\qquad \text{for $t\in[0,\tau]$, $\mathbb{P}$- a.s.}
		\end{align}
		Hence, taking expectations in \eqref{mart_gen} we have
		\begin{align}\label{gen_lim}
			\overline{\mathcal{L}}^{(\mathbf{z},n)}f(\mathbf{r})=n\left[\E_{(\mathbf{r},\mathbf{z})}\left[f(\mathbf{R}_{\frac{1}{n}\wedge \tau})\right]-f(\mathbf{r})\right]=n\E_{(\mathbf{r},\mathbf{z})}\left[\int_{0}^{n^{-1}\wedge \tau}\sum_{1=1}^2D^i_f(R^1_s,R^2_s,Z^1_s,Z^2_s)ds\right].
		\end{align}
		Now, by \eqref{bound_con} 
		\[
		\left|n\int_{0}^{n^{-1}\wedge \tau}\sum_{1=1}^2D^i_f(R^1_s,R^2_s,Z^1_s,Z^2_s)ds\right|\leq C\qquad \text{$\mathbb{P}$-a.s.}
		\]
		Hence, using \eqref{gen_lim} together with dominated convergence gives for $\textbf{r}=(r_1,r_2)\in[0,1]^2$
		\begin{align}\label{point:con}
			\lim_{n\to\infty}\overline{\mathcal{L}}^{(\mathbf{z},n)}f(\mathbf{r})=\lim_{n\to\infty}n\E_{(\mathbf{r},\mathbf{z})}\left[\int_{0}^{n^{-1}\wedge \tau}\sum_{1=1}^2D^i_f(R^1_s,R^2_s,Z^1_s,Z^2_s)ds\right]=\sum_{1=1}^2D^i_f(r_1,r_2,z_1,z_2)=\mathcal{L}^{\mathbf{z}}f(\mathbf{r}).
		\end{align}
		To finish the proof we only need to show that the convergence given in \eqref{point:con} holds in the strong sense, this is achieved by an application of Theorem 1.33 in \cite{Sch}.
	\end{proof}
	
	\section{Moment duality for asymmetric two-island frequency processes.}\label{duality}

	Throughout this section we will prove that the moment dual of the asymmetric two-island frequency process $\mathbf{R}^{(\mathbf{z},\mathbf{r})}$, introduced in Section \ref{ATIFM}, is given by the continuous-time Markov chain $\mathbf{N}^{(\mathbf{z},\mathbf{n})}$ given in Section \ref{duality-intro}. To obtain this result we will rely on martingale techniques, which are quite useful for solving this type of problems. The following proposition is a direct consequence of Theorem 4.11 in Ethier Kurtz \cite{EK} taking $H$ bounded and continuous, and $\alpha=\beta=0$, and can also be seen as a
	small modification of Proposition 1.2 of Jansen and Kurt in \cite{JK}.
	\begin{proposition}\label{prop:provingduality}
		Let $Y^{(1)}=\{Y^{(1)}_t: t\geq 0\}$ and $Y^{(2)}=\{Y^{(2)}_t: t\geq 0\}$ be two Markov processes taking values on $E_1$ and $ E_2$, respectively. Let $H:E_1\times E_2\rightarrow \mathbb{R}$ be a bounded and continuous function and assume that there exist a Borel measurable function $g:E_1\times E_2\mapsto \mathbb{R}$, such that for every $n\in E_1,x\in E_2$ and every $T>0$,  the processes $M^{(1)}=\{M_t^{(1)}: 0\le t\le T\}$ and $M^{(2)}=\{M_t^{(2)}, 0\le t\le T\}$, defined as follows
		\begin{eqnarray}
			M_t^{(1)}:=H(Y^{(1)}_t,x)&-&\int_0^t g(Y^{(1)}_s,x)\ud s\label{MG1}\\
			M_t^{(2)}:=H(n,Y^{(2)}_t)&-&\int_0^t g(n, Y^{(2)}_s)\ud s \label{MG2}
		\end{eqnarray}
		are martingales with respect to  the natural filtration of $Y^{(1)}_t$ and $Y^{(2)}_t$, respectively. Then, 
		the processes $Y^{(1)}$ and $Y^{(2)}$ are dual with respect to $H$. 
	\end{proposition}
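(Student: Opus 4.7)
The plan is to follow the classical interpolation argument for moment duality, essentially reproducing the proof of Theorem~4.11 in Ethier--Kurtz in this bounded/continuous setting. Fix $n\in E_1$, $x\in E_2$, and $T>0$, and work on a product probability space carrying independent copies of $Y^{(1)}$ started from $n$ and $Y^{(2)}$ started from $x$. The goal is to establish
$$
\mathbb{E}[H(Y^{(1)}_T, x)] = \mathbb{E}[H(n, Y^{(2)}_T)].
$$
The key observation is that if one introduces $F(s):=\mathbb{E}[H(Y^{(1)}_s, Y^{(2)}_{T-s})]$ for $s\in[0,T]$, then $F(0)$ and $F(T)$ are exactly the two sides above, so it suffices to show that $F$ is constant on $[0,T]$.

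To this end, I would first use the martingale identity \eqref{MG1} with the second argument of $H$ frozen at a value of $Y^{(2)}_t$, then average over $Y^{(2)}$ using independence and Fubini's theorem. This yields, for $0\le s\le s'$ and any fixed $t\ge 0$,
$$
\mathbb{E}[H(Y^{(1)}_{s'}, Y^{(2)}_t)] - \mathbb{E}[H(Y^{(1)}_s, Y^{(2)}_t)] = \int_s^{s'} \mathbb{E}[g_1(Y^{(1)}_u, Y^{(2)}_t)]\,du.
$$
Applying \eqref{MG2} symmetrically in the second coordinate gives an analogous formula with $g_2$. Decomposing the path from $(s,T-s)$ to $(s+h,T-s-h)$ through the corner $(s+h,T-s)$, these two identities combine to
$$
F(s+h) - F(s) = \int_s^{s+h} \mathbb{E}[g_1(Y^{(1)}_u, Y^{(2)}_{T-s})]\,du \; - \int_{T-s-h}^{T-s} \mathbb{E}[g_2(Y^{(1)}_{s+h}, Y^{(2)}_v)]\,dv.
$$
Summing over a partition $0=s_0<\dots<s_N=T$ and letting the mesh tend to zero, both Riemann-type sums should converge to $\int_0^T \mathbb{E}[g_i(Y^{(1)}_s, Y^{(2)}_{T-s})]\,ds$ for $i=1,2$; the hypothesis $g_1=g_2$ then forces the limits to coincide, so $F(T)=F(0)$.

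The main obstacle is justifying this passage to the limit. The hypotheses only guarantee that $Y^{(1)},Y^{(2)}$ are Markov with the prescribed martingale property and that $H$ is bounded and continuous; no a~priori regularity of $g_1,g_2$ along the paths is assumed. To handle this, I would extract from \eqref{MG1} that $s\mapsto\mathbb{E}[H(Y^{(1)}_s,y)]$ is absolutely continuous with derivative $\mathbb{E}[g_1(Y^{(1)}_s,y)]$, and combine right-continuity of the Markov processes with the boundedness of $H$ to obtain dominated convergence of the Riemann sums, with the integrability of $g_i$ along the paths being automatic from the fact that $\int_0^T |g_i(Y^{(i)}_u,\cdot)|\,du$ arises as the compensator of a bounded continuous functional. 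Once this analytic step is in place, the cancellation provided by $g_1=g_2$ together with the boundary values $F(0)=\mathbb{E}[H(n,Y^{(2)}_T)]$ and $F(T)=\mathbb{E}[H(Y^{(1)}_T,x)]$ delivers the duality relation, as required.
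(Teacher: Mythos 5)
The paper gives no proof of this proposition at all: it is stated as "a direct consequence of Theorem 4.11 in Ethier--Kurtz \cite{EK} taking $H$ bounded and continuous, and $\alpha=\beta=0$," and as a small modification of Proposition 1.2 of Jansen--Kurt \cite{JK}. Your interpolation argument---showing that $s\mapsto\mathbb{E}\bigl[H(Y^{(1)}_s,Y^{(2)}_{T-s})\bigr]$ is constant by decomposing increments through the corner $(s+h,T-s)$, applying each martingale identity in one coordinate, and letting the Riemann sums for $g_1$ and $g_2$ cancel---is precisely the standard proof of that cited theorem, so it is correct and in substance the same route the paper relies on.
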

	The previous result defines a general duality relation between two Markov processes. Because we are interested in moment duality we will use Proposition \ref{prop:provingduality} with the particular case of $H(x,y,n,m)=x^ny^m$ with $x,y\in[0,1]$ and $n,m\in\mathbb{N}_0$. We assume that $H(x,y,n,m)=0$ for $(x,y)\in[0,1]^2$ and $(n,m)=\Delta$.
	\begin{theorem}\label{dual}
		Assume that $q^{\mathbf{z}}_{(i_1,i_2),(j_1,j_2)}\geq0$ for all $(i_1,i_2),(j_1,j_2)\in\mathbb{N}_0^2\cup\{\Delta\}$. Then for all $\mathbf{r}=(r_1,r_2)\in[0,1]^2$, $\mathbf{n}=(n_1,n_2)\in\mathbb{N}_0^2\cup\{\Delta\}$ and $t>0$
		\begin{equation}
			\E\left[(R_t^{({\mathbf{z}},{\mathbf{r}}),1})^{n_1}(R_t^{({\mathbf{z}},{\mathbf{r}}),2})^{n_2}\right]=\E\left[r_1^{N^{(\mathbf{z},\mathbf{n}),1}_t}r_2^{N^{(\mathbf{z},\mathbf{n}),2}_t}\right].
		\end{equation}
	\end{theorem}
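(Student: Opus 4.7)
The plan is to apply Proposition \ref{prop:provingduality} with the duality function $H(\mathbf{r},\mathbf{n}) = r_1^{n_1}r_2^{n_2}$ for $\mathbf{r}=(r_1,r_2)\in[0,1]^2$ and $\mathbf{n}=(n_1,n_2)\in\mathbb{N}_0^2$, extended by $H(\mathbf{r},\Delta)=0$. The problem then reduces to verifying the operator identity
$$
\mathcal{L}^{\mathbf{z}}_{\mathbf{r}} H(\mathbf{r},\mathbf{n}) \;=\; \mathcal{Q}^{\mathbf{z}}_{\mathbf{n}} H(\mathbf{r},\mathbf{n}), \qquad \mathbf{r}\in[0,1]^2,\ \mathbf{n}\in\mathbb{N}_0^2,
$$
where the subscripts indicate the variable acted upon, $\mathcal{L}^{\mathbf{z}}$ is the generator given in Proposition \ref{gen_culling}, and $\mathcal{Q}^{\mathbf{z}}$ is the jump-rate matrix defined above. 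Once this identity is established, the two integrands $g_1$ and $g_2$ of \eqref{MG1}--\eqref{MG2} coincide, and Proposition \ref{prop:provingduality} yields the duality.

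For the left-hand side, I would use the identities $\partial_i H = (n_i/r_i) H$ and $\partial_{ii}^2 H = n_i(n_i-1)/r_i^2\, H$ (with the usual convention that terms with $n_i=0$ or $n_i=1$ drop out), and substitute them into the formulas for $D^1_f$ and $D^2_f$. Each resulting contribution becomes a polynomial in $(r_1,r_2)$ multiplied by $r_1^{n_1}r_2^{n_2}$, and can be regrouped as a linear combination of $H(\mathbf{r},\mathbf{n}')$ for the neighboring indices $\mathbf{n}' \in \{\mathbf{n}\pm e_i,\ \mathbf{n}+e_i-e_j,\ \Delta,\ \mathbf{n}\}$ that appear in $\mathcal{Q}^{\mathbf{z}}$. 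Concretely, a branching jump $(n_1,n_2)\to(n_1+1,n_2)$ contributes the factor $r_1^{n_1}(r_1-1)r_2^{n_2}=-(1-r_1)H(\mathbf{r},\mathbf{n})$ on the right, matching the drift terms involving $(B_{11}^Y-B_{11}^X)$, $(B_{21}^Y-B_{21}^X)$, and the $-2c_1^X/z_1\,r_1(1-r_1)\partial_1 f$ correction on the left (whose rearrangement produces the $\binom{n_1}{2}(c_1^X-c_1^Y)$ factor). The coalescence jump $(n_1,n_2)\to(n_1-1,n_2)$ contributes $r_1^{n_1-1}(1-r_1)r_2^{n_2}$ and absorbs the diffusive term $(c_1^X/z_1)r_1(1-r_1)^2\partial_{11}^2 f$ together with the $\beta_1^X$ immigration contribution; the mixed jumps $(n_1,n_2)\to(n_1\mp 1,n_2\pm 1)$ pick up the off-diagonal $B^X_{ij}$ drift; finally the transition to $\Delta$ removes the $\beta^Y$ terms. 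This matching is mechanical but lengthy, and the main obstacle is simply carrying it out with correct combinatorial prefactors and signs across all seven jumps.

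To conclude via Proposition \ref{prop:provingduality}, I would verify that the processes $M^{(1)}$ and $M^{(2)}$ defined in \eqref{MG1}--\eqref{MG2} with the common integrand $g=\mathcal{L}^{\mathbf{z}}_{\mathbf{r}}H=\mathcal{Q}^{\mathbf{z}}_{\mathbf{n}}H$ are genuine martingales, not just local martingales. For $M^{(1)}$, since $\mathbf{r}\mapsto H(\mathbf{r},\mathbf{n})$ is a polynomial (hence $\mathcal{C}^2$ on the compact set $[0,1]^2$) and $\mathbf{R}^{(\mathbf{z},\mathbf{r})}$ stays in $[0,1]^2$, Itô's formula gives boundedness of both $H$ and its generator along the path, so the local-martingale part is a true martingale by the bounded convergence theorem. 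For $M^{(2)}$, the chain $\mathbf{N}^{(\mathbf{z},\mathbf{n})}$ has rates that grow at most quadratically, so non-explosion in finite time must be argued (e.g.\ by comparison with a pure pair-branching process, or by localizing at the hitting time of $\{|\mathbf{n}|=k\}$ and letting $k\to\infty$ using $|H|\le 1$ and dominated convergence). This gives the two martingale conditions required by Proposition \ref{prop:provingduality}, and the duality statement follows.
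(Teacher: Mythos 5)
Your proposal is correct and follows essentially the same route as the paper: both apply Proposition \ref{prop:provingduality} with $H(\mathbf{r},\mathbf{n})=r_1^{n_1}r_2^{n_2}$ (extended by $0$ on $\Delta$), verify the operator identity $\mathcal{L}^{\mathbf{z}}H=\mathcal{Q}^{\mathbf{z}}H$ by the explicit term-by-term regrouping you describe, and handle the martingale/domain requirements by the polynomial structure of $H$ on the compact state space (the paper disposes of the dual-chain side by citing the proof of Lemma 2 in \cite{GPP}, where you instead sketch a localization argument). The only substantive content the paper adds beyond your outline is actually carrying out the algebraic matching in \eqref{dual_1}--\eqref{dual_2}.
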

	\begin{proof}
		We will consider $\N_0^2\cup \{\Delta\}$ endowed with the discrete topology and $\N_0^2\cup \{\Delta\}\times [0,1]^2$ with the product topology. We recall that for every fixed $(r_1,r_2)\in[0,1]^2$, $H(r_1,r_2,n_1,n_2)=r_1^{n_1}r_2^{n_2}$ with $H(r_1,r_2,n_1,0)=r_1^{n_1}$, $H(r_1,r_2,0,n_2)=r_2^{n_2}$, $H(r_1,r_2,0,0)=1$  and $H(r_1,r_2,n_1,n_2)=0$ if $(n_1,n_2)=\Delta$, which are bounded and continuous. 
		In addition, for every fixed $\mathbf{n}=(n_1,n_2)\in \N_0^2\cup \{\Delta\}$, $H(r_1,r_2,n_1,n_2)=r_1^{n_1}r_2^{n_2}1_{\{\mathbf{n}\not=\Delta\}}$ is continuous. Therefore, we conclude that $H:\N_0^2\cup \{\Delta\}\times [0,1]^2\mapsto[0,1] $ is continuous.
		
		We observe that $H(\cdot,n_1,n_2)$ is a polynomial in $[0,1]^2$ for fixed $(n_1,n_2)\in\mathbb{N}_0^2\cup\{\Delta\}$. This fact clearly implies that $H(\cdot,n_1,n_2)\in\mathcal{C}^2([0,1]^2)$ and hence it lies in the domain of the generator $\mathcal{L}^{\mathbf{z}}$. Therefore, the process
		\[
		H(R_t^{({\mathbf{z}},{\mathbf{r}}),1},R_t^{({\mathbf{z}},{\mathbf{r}}),2},n_1,n_2)-\int_0^t\mathcal{L}^{\mathbf{z}}H(R_s^{({\mathbf{z}},{\mathbf{r}}),1},R_s^{({\mathbf{z}},{\mathbf{r}}),2},n_1,n_2)ds
		\]
		is a martingale.
		
		Additionally, as in the proof of Lemma 2 in \cite{GPP}, we have that for fixed $(r_1,r_2)\in[0,1]^2$ the function $H(r_1,r_2,\cdot)$ lies in the domain of the generator $\mathcal{Q}^{\mathbf{z}}$, which implies that the process
		\[
		H(r_1,r_2,N^{(\mathbf{z},\mathbf{n}),1}_t,N^{(\mathbf{z},\mathbf{n}),2}_t)-\int_0^t\mathcal{Q}^{\mathbf{z}}H(r_1,r_2,N^{(\mathbf{z},\mathbf{n}),1}_s,N^{(\mathbf{z},\mathbf{n}),2}_s)ds
		\]
		is also a martingale.
		In view of Proposition \ref{prop:provingduality}, we will compute $\mathcal{L}^{\mathbf{z}}H(r_1,r_2,n_1,n_2)$ for $(r_1,r_2)\in[0,1]^2$ and $(n_1,n_2)\in\mathbb{N}_0^2\backslash\{(0,0)\}$. Using Proposition \ref{gen_culling} we obtain that
		\begin{align}\label{dual_0}
			\mathcal{L}^{\mathbf{z}}H(r_1,r_2,n_1,n_2)=\sum_{i=1}^2D^i_H(r_1,r_2,z_1,z_2),
		\end{align}
		with
		\begin{align}\label{dual_1}
			D^1_H(r_1,&r_2,z_1,z_2)\notag\\&=n_1r^{n_1-1}_1r_2^{n_2}(1-r_1)\left(B_{11}^Xr_1+B^X_{12}\frac{z_2}{z_1}r_2+\frac{\beta_1^X}{z_1}\right)-n_1r_1^{n_1}r_2^{n_2}\left(B_{11}^Y(1-r_1)+B^Y_{12}\frac{z_2}{z_1}(1-r_2)+\frac{\beta_1^Y}{z_1}\right)\notag\\
			&+n_2r_1^{n_1}r_2^{n_2-1}(1-r_2)\left(B_{21}^X\frac{z_1}{z_2}r_1+B^X_{22}r_2+\frac{\beta_2^X}{z_2}\right)-n_2r_1^{n_1}r_2^{n_2}\left(B_{21}^Y\frac{z_1}{z_2}(1-r_1)+B^Y_{22}(1-r_2)+\frac{\beta_2^Y}{z_2}\right),\notag\\
			&=(r_1^{n_1+1}r_2^{n_2}-r_1^{n_1}r_2^{n_2})\left[n_1(B_{11}^Y-B_{11}^X)+n_2\frac{z_1}{z_2}(B_{21}^Y-B_{21}^X)\right]+(r_1^{n_1-1}r_2^{n_2}-r_1^{n_1}r_2^{n_2})n_1\frac{\beta^X_1}{z_1}\notag\\
			&+(r_1^{n_1}r_2^{n_2+1}-r_1^{n_1}r_2^{n_2})\left[n_2(B_{22}^Y-B_{22}^X)+n_1\frac{z_2}{z_1}(B_{12}^Y-B_{12}^X)\right]+(r_1^{n_1}r_2^{n_2-1}-r_1^{n_1}r_2^{n_2})n_2\frac{\beta^X_2}{z_2}\notag\\
			&+(r_1^{n_1-1}r_2^{n_2+1}-r_1^{n_1}r_2^{n_2})n_1B^X_{12}\frac{z_2}{z_1}+(r_1^{n_1+1}r_2^{n_2-1}-r_1^{n_1}r_2^{n_2})n_2B^X_{21}\frac{z_1}{z_2}-r_1^{n_1}r_2^{n_2}\left[n_1\frac{\beta_1^Y}{z_1}+n_2\frac{\beta_2^Y}{z_2}\right].
		\end{align}
		and
		\begin{align*}
			D_H^2(r_1,r_2,z_1,z_2)
			&=\frac{c_1^X}{z_1}\left((1-r_1)^2{n_1}(n_1-1)r_1^{n_1-1}r_2^{n_2}-2(1-r_1)n_1r_1^{n_1}r_2^{n_2}\right)\notag\\&+\frac{c_2^X}{z_2}\left((1-r_2)^2n_2(n_2-1)r_1^{n_1}r_2^{n_2-1}-2(1-r_2)n_2r_1^{n_1}r_2^{n_2}\right)\notag\\
			&+\frac{c_1^Y}{z_1}\left((1-r_1)n_1(n_1-1)r_1^{n_1}r_2^{n_2}+2(1-r_1)n_1r_1^{n_1}r_2^{n_2}\right)
			\end{align*}
			\begin{align}\label{dual_2}
			&+\frac{c_2^Y}{z_2}\left((1-r_2)n_2(n_2-1)r_1^{n_1}r_2^{n_2}+2(1-r_2)n_2r_1^{n_1}r_2^{n_2}\right)\notag\\
			&=(r_1^{n_1+1}r_2^{n_2}-r_1^{n_1}r_2^{n_2})\left(\frac{2n_1}{z_1}(c_1^X-c_1^Y)+\frac{n_1(n_1-1)}{z_1}(c_1^X-c_1^Y)\right)\notag\\&+(r_1^{n_1-1}r_2^{n_2}-r_1^{n_1}r_2^{n_2})n_1(n_1-1)\frac{c_1^X}{z_1}\notag\\
			&+(r_1^{n_1}r_2^{n_2+1}-r_1^{n_1}r_2^{n_2})\left(\frac{2n_2}{z_2}(c_2^X-c_2^Y)+\frac{n_2(n_2-1)}{z_2}(c_2^X-c_2^Y)\right)\notag\\&+(r_1^{n_1}r_2^{n_2-1}-r_1^{n_1}r_2^{n_2})n_2(n_2-1)\frac{c_2^X}{z_2}.
		\end{align}
		Therefore using \eqref{dual_1} and \eqref{dual_2} in \eqref{dual_0} we obtain
		\begin{align*}
			\mathcal{L}^{\mathbf{z}}&H(r_1,r_2,n_1,n_2)\notag\\
			&=(r_1^{n_1+1}r_2^{n_2}-r_1^{n_1}r_2^{n_2})\left[n_1(B_{11}^Y-B_{11}^X)+n_2\frac{z_1}{z_2}(B_{21}^Y-B_{21}^X)+\frac{2n_1}{z_1}(c_1^X-c_1^Y)+\frac{n_1(n_1-1)}{z_1}(c_1^X-c_1^Y)\right]\notag\\
			&+(r_1^{n_1}r_2^{n_2+1}-r_1^{n_1}r_2^{n_2})\left[n_2(B_{22}^Y-B_{22}^X)+n_1\frac{z_2}{z_1}(B_{12}^Y-B_{12}^X)+\frac{2n_2}{z_2}(c_2^X-c_2^Y)+\frac{n_2(n_2-1)}{z_2}(c_2^X-c_2^Y)\right]
			\end{align*}
		\begin{align*}
			&+(r_1^{n_1-1}r_2^{n_2}-r_1^{n_1}r_2^{n_2})\left[n_1\frac{\beta^X_1}{z_1}+n_1(n_1-1)\frac{c_1^X}{z_1}\right]+(r_1^{n_1}r_2^{n_2-1}-r_1^{n_1}r_2^{n_2})\left[n_2\frac{\beta^X_2}{z_2}+n_2(n_2-1)\frac{c_2^X}{z_2}\right]\notag\\
			&+(r_1^{n_1-1}r_2^{n_2+1}-r_1^{n_1}r_2^{n_2})n_1B^X_{12}\frac{z_2}{z_1}+(r_1^{n_1+1}r_2^{n_2-1}-r_1^{n_1}r_2^{n_2})n_2B^X_{21}\frac{z_1}{z_2}-r_1^{n_1}r_2^{n_2}\left[n_1\frac{\beta_1^Y}{z_1}+n_2\frac{\beta_2^Y}{z_2}\right]\notag\\&=\mathcal{Q}^{\mathbf{z}}H(r_1,r_2,n_1,n_2).
		\end{align*}
		Meanwhile, for the case $(n_1,n_2)=(0,0)$, we have that $\mathcal{L}^{\mathbf{z}}H(r_1,r_2,0,0)=0=\mathcal{Q}^{\mathbf{z}}H(r_1,r_2,0,0)$ for $(r_1,r_2)\in[0,1]^2$. Similarly, we have that $\mathcal{L}^{\mathbf{z}}H(r_1,r_2,\Delta)=0=\mathcal{Q}^{\mathbf{z}}H(r_1,r_2,\Delta)$ for $(r_1,r_2)\in[0,1]^2$. Therefore, the result follows from Proposition \ref{prop:provingduality}.
	\end{proof}
	

	\section{Large population asymptotics of asymmetric two-island frequency processes.}\label{asymp}
	To disentangle the effects and interactions of the different forms of selection included in our model, we will send the population size to infinity to isolate the deterministic action of selection from the random noise coming from the random genetic drift. To this end, in this section we obtain the deterministic limit obtained when the size of the population in each island grows to infinity. This will allows us to study in the next section the deterministic system that emerges.

	\subsection{Large population limit of asymmetric two-island frequency processes.} 
	Throughout this section, we will assume that the total sizes of the population of each of the islands, given by $z_1$ and $z_2$, are equal to a constant $z>0$. For this reason, we will use $\mathbf{R}^{(z,\mathbf{r})}=(R^{(z,\mathbf{r}),1},R^{(z,\mathbf{r}),2})$ to denote the unique solution to \eqref{sde}, and study the asymptotic behavior of the process $\mathbf{R}^{(z,\mathbf{r})}$ as $z$ increases to infinity. The limiting object, which we denote by $\mathbf{R}^{(\infty,\mathbf{r})}=(R^{(\infty,\mathbf{r}),1},R^{(\infty,\mathbf{r}),2})$, is deterministic and it is the unique solution to the following ordinary differential equation:
	\begin{align}\label{det_lim}
		dR^{(\infty,r),i}_t&=\tilde{b}^i(\mathbf{R}^{(\infty,\mathbf{r})}_t)dt, \qquad t>0, \ i=1,2,
	\end{align}
	with $\mathbf{R}^{(\infty,\mathbf{r})}_0=(r_1,r_2)$ and where 
	\begin{align}\label{fun_b_tilde}
		\tilde{b}^1(\mathbf{x})=(B_{11}^X-B_{11}^Y)x_1(1-x_1)+B_{12}^X(1-x_1)x_2-B_{12}^Yx_1(1-x_2),\notag\\
		\tilde{b}^2(\mathbf{x})=(B_{22}^X-B_{22}^Y)x_2(1-x_2)+B_{21}^X(1-x_2)x_1-B_{21}^Yx_2(1-x_1).
	\end{align}
	for $\mathbf{x}=(x_1,x_2)\in[0,1]^2$.
	
	The next result shows the convergence of the process $\mathbf{R}^{(z,\mathbf{r})}$ to $\mathbf{R}^{(\infty,\mathbf{r})}$ in $p$-th mean, uniformly on compact sets of $\R_+$.
	\begin{theorem}\label{lon}
		For any $T>0$ and $p>1$,
		\begin{align*}
			\lim_{z\to\infty}\E\left[\sup_{t\in[0,T]}\|\mathbf{R}^{(z,\mathbf{r})}_t-\mathbf{R}^{(\infty,\mathbf{r})}_t\|^p\right]=0.
		\end{align*}
	\end{theorem}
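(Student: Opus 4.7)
The plan is to compare the two processes pathwise through their integral representations, isolating a vanishing perturbation in the drift and a vanishing martingale term, then close the loop with Gronwall's inequality applied to the $p$-th moment.

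First, I would write out the integral equation
\begin{align*}
R^{(z,\mathbf{r}),i}_t-R^{(\infty,\mathbf{r}),i}_t=\int_0^t\bigl[b^i(\mathbf{R}^{(z,\mathbf{r})}_s)-\tilde b^i(\mathbf{R}^{(\infty,\mathbf{r})}_s)\bigr]\,ds+M^{z,i}_t,\qquad i=1,2,
\end{align*}
where $M^{z,i}_t$ is the It\^o integral against $W^i$ arising from \eqref{sde} with $z_1=z_2=z$. The key observation is that, comparing \eqref{b_1_a}--\eqref{b_2_a} to \eqref{fun_b_tilde}, one has $b^i(\mathbf{x})=\tilde b^i(\mathbf{x})+\frac{1}{z}\rho^i(\mathbf{x})$, where $\rho^i$ collects the immigration terms $\beta^X_i,\beta^Y_i$ and the symmetrizing term $-2(c^X_i-c^Y_i)x_i(1-x_i)$, all of which are uniformly bounded on $[0,1]^2$. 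Thus I would decompose
\begin{align*}
b^i(\mathbf{R}^{(z,\mathbf{r})}_s)-\tilde b^i(\mathbf{R}^{(\infty,\mathbf{r})}_s)=\bigl[\tilde b^i(\mathbf{R}^{(z,\mathbf{r})}_s)-\tilde b^i(\mathbf{R}^{(\infty,\mathbf{r})}_s)\bigr]+\frac{1}{z}\rho^i(\mathbf{R}^{(z,\mathbf{r})}_s),
\end{align*}
and use the fact that $\tilde b$ is polynomial, hence globally Lipschitz on the compact set $[0,1]^2$ with some constant $L$.

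Next, I would control the martingale term. The diffusion coefficient in \eqref{sde} carries an explicit factor $z^{-1/2}$ and is bounded on $[0,1]^2$, so the quadratic variation satisfies $\langle M^{z,i}\rangle_t\le Kt/z$ for a constant $K$ depending only on the parameters. The Burkholder--Davis--Gundy inequality then yields, for every $p>1$,
\begin{align*}
\E\Bigl[\sup_{s\in[0,t]}|M^{z,i}_s|^p\Bigr]\le C_p\,\E\bigl[\langle M^{z,i}\rangle_t^{p/2}\bigr]\le \frac{C_p K^{p/2}T^{p/2}}{z^{p/2}},\qquad t\in[0,T].
\end{align*}

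Combining these two ingredients, taking norms, raising to the $p$-th power and using $|a+b+c|^p\le 3^{p-1}(|a|^p+|b|^p+|c|^p)$ plus Jensen/H\"older on the time integral, I obtain
\begin{align*}
\E\Bigl[\sup_{s\in[0,t]}\|\mathbf{R}^{(z,\mathbf{r})}_s-\mathbf{R}^{(\infty,\mathbf{r})}_s\|^p\Bigr]\le \frac{A}{z^{p/2}}+B\int_0^t\E\Bigl[\sup_{u\in[0,s]}\|\mathbf{R}^{(z,\mathbf{r})}_u-\mathbf{R}^{(\infty,\mathbf{r})}_u\|^p\Bigr]\,ds,
\end{align*}
for constants $A,B$ depending on $T,p,L,K$ but not on $z$. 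Gronwall's inequality then gives $\E[\sup_{t\in[0,T]}\|\mathbf{R}^{(z,\mathbf{r})}_t-\mathbf{R}^{(\infty,\mathbf{r})}_t\|^p]\le A e^{BT}/z^{p/2}\to 0$ as $z\to\infty$, which is the claim.

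The only mildly delicate step is the BDG bound with the correct $z$-scaling: one must note that the integrand of each stochastic integral is bounded (because $\mathbf{R}^{(z,\mathbf{r})}$ lives in $[0,1]^2$ by Proposition \ref{exis_uni_sde}) and carries the $z^{-1/2}$ prefactor, so the quadratic variation is genuinely $O(1/z)$ uniformly in starting point. Once this is in hand, everything else is a standard Gronwall argument on a compact time interval, and the Lipschitz constant $L$ of $\tilde b$ is trivial to exhibit from the explicit polynomial expressions in \eqref{fun_b_tilde}.
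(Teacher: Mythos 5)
Your proposal is correct and follows essentially the same route as the paper: the same decomposition of the drift into the Lipschitz part $\tilde b^i$ plus an $O(1/z)$ bounded perturbation (the paper's $a^i$), the same BDG bound exploiting the $z^{-1/2}$ prefactor and the confinement of $\mathbf{R}^{(z,\mathbf{r})}$ to $[0,1]^2$, and the same Gronwall closure. The only cosmetic difference is that you absorb the $z^{-p}$ contribution of the drift perturbation into the dominant $z^{-p/2}$ rate, whereas the paper keeps both terms explicit.
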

	\begin{proof}
		Using \eqref{sde} together with \eqref{det_lim} we obtain for $i=1,2$
		\begin{align}\label{sde_diff}
			R^{(z,\mathbf{r}),i}_t-R^{(\infty,{\mathbf{r}}),i}_t&=\int_0^t\left(\tilde{b}^i(\mathbf{R}^{(z,\mathbf{r})}_s)-\tilde{b}^i(\mathbf{R}^{(\infty,\mathbf{r})}_s)+a^i(\mathbf{R}^{(z,\mathbf{r})}_s)\right)ds\notag\\&+\int_0^t\sqrt{\frac{c_i^X}{z}R^{({z},{\mathbf{r}}),i}_s(1-R^{({z},{\mathbf{r}}),i}_s)^2+\frac{c_i^Y}{z}(R^{({z},{\mathbf{r}}),i}_s)^2(1-R^{({z},{\mathbf{r}}),i}_s)}dW^{i}_s,
		\end{align}
		where $\tilde{b}^i$ is given in \eqref{fun_b_tilde} for $i=1,2$ and
		\begin{align*}
			a^i(\mathbf{x})=\frac{2}{z}(c_i^Y-c_i^X)x_i(1-x_i)+\frac{\beta_i^X}{z}(1-x_i)-\frac{\beta_i^Y}{z}x_i, \qquad \mathbf{x}\in[0,1]^2.
		\end{align*}
		A straightforward computation shows that for $i=1,2$, there exists a constant $C_i>0$ such that for any $\mathbf{x},\mathbf{y}\in[0,1]^2$
		\begin{align*}
			|\tilde{b}^i(\mathbf{x})-\tilde{b}^i(\mathbf{y})+a^{i}(\mathbf{x})|\leq C_i\left(|x_1-y_1|+|x_2-y_2|\right)+\frac{2}{z}|c_i^Y-c_i^X|+\frac{\beta_i^X}{z}+\frac{\beta_i^Y}{z}. 
		\end{align*}
		Hence, using Jensen's inequality and the Burkholder-Davis-Gundy inequality we obtain for some constant $A_i>0$ and $i=1,2$ 
		\begin{align}\label{bound_limit}
			\E\Big[&\sup_{t\in[0,T]}|R^{({z},{\mathbf{r}}),i}_t-R^{(\infty,{\mathbf{r}}),i}_t|^p\Big]\leq 2^p\Bigg[T^{p-1}\E\int_0^T\left|\tilde{b}^i(\mathbf{R}^{(z,\mathbf{r})}_s)-\tilde{b}^i(\mathbf{R}^{(\infty,\mathbf{r})}_s)+a^i(\mathbf{R}^{(z,\mathbf{r})}_s)\right|^pds\notag\\&+A_i\E\left[\int_0^T\left(\frac{c_i^X}{z}R^{({z},{\mathbf{r}}),i}_s(1-R^{({z},{\mathbf{r}}),i}_s)^2+\frac{c_i^Y}{z}(R^{({z},{\mathbf{r}}),i}_s)^2(1-R^{({z},{\mathbf{r}}),i}_s)\right)ds\right]^{p/2}\Bigg]\notag\\
			&\leq K_i\Bigg[T^{p-1}\E\int_0^T\sum_{i=1}^2\left|R^{({z},{\mathbf{r}}),i}_s-R^{(\infty,{\mathbf{r}}),i}_s\right|^pds+\left(\frac{T}{z}\right)^{p}\left(2|c_i^Y-c_i^X|+\beta_i^X+\beta_i^Y\right)^p+\left(\frac{T}{z}\right)^{p/2}(c_i^Y+c_i^X)^{p/2}\Bigg]\notag\\
			&\leq K_i\Bigg[2T^{p-1}\E\int_0^T\sup_{u\in[0,s]}\left\|\mathbf{R}^{(z,\mathbf{r})}_u-\mathbf{R}^{(\infty,\mathbf{r})}_u\right\|^pds+\left(\frac{T}{z}\right)^{p}\left(2|c_i^Y-c_i^X|+\beta_i^X+\beta_i^Y\right)^p+\left(\frac{T}{z}\right)^{p/2}(c_i^Y+c_i^X)^{p/2}\Bigg],
		\end{align}
		for some constant $K_i>0$.
		The previous identity implies that for $i=1,2$, there exists a constant $\tilde{C}_i>0$ such that
		\begin{align*}
			\E\Big[\sup_{t\in[0,T]}&|R^{({z},{\mathbf{r}}),i}_t-R^{(\infty,{\mathbf{r}}),i}_t|^p\Big]\leq 2K_iT^{p-1}\int_0^T\E\left[\sup_{u\in[0,s]}\left\|\mathbf{R}^{(z,\mathbf{r})}_u-\mathbf{R}^{(\infty,\mathbf{r})}_u\right\|^p\right]ds+\tilde{C}_i(1+T^p)\left(\frac{1}{z^{p/2}}+\frac{1}{z^{p}}\right).
		\end{align*}
		Therefore, by an application of Gronwall's lemma we have
		\begin{align*}
			\E\Big[\sup_{t\in[0,T]}&\|\mathbf{R}^{(z,\mathbf{r})}_t-\mathbf{R}^{(\infty,\mathbf{r})}_t\|^p\Big]\leq 2^{p/2}(1+T^p)\sum_{i=1}^2\tilde{C}_i\left(\frac{1}{z^{p/2}}+\frac{1}{z^{p}}\right)e^{2^{p/2+1}T^{p}\sum_{i=1}^2K_i }.
		\end{align*}
		Hence,
		\begin{align}\label{bound_tightness}
			\lim_{z\to\infty}\E\Big[\sup_{t\in[0,T]}&\|\mathbf{R}^{(z,\mathbf{r})}_t-\mathbf{R}^{(\infty,\mathbf{r})}_t\|^p\Big]\leq\lim_{z\to\infty} 2^{p/2}(1+T^p)\sum_{i=1}^2\tilde{C}_i\left(\frac{1}{z^{p/2}}+\frac{1}{z^{p}}\right)e^{2^{p/2+1}T^{p}\sum_{i=1}^2K_i }=0.
		\end{align}
	\end{proof}
	\subsection{Fluctuations of asymmetric two-island frequency processes.} 
	In this section we will study the fluctuations of the process $\mathbf{R}^{(z,\mathbf{r})}$ around its large population limit $\mathbf{R}^{(\infty,\mathbf{r})}$. To this end let us define
	\begin{align}\label{fun_A}
		A^{(1)}_t:&=(B^X_{11}-B^Y_{11})(2R_t^{(\infty,{\mathbf{r}}),1}-1)+B_{12}^XR_t^{(\infty,{\mathbf{r}}),2}-B_{12}^Y(R_t^{(\infty,{\mathbf{r}}),2}-1),\notag\\
		A^{(2)}_t:&=(B^X_{22}-B^Y_{22})(2R_t^{(\infty,{\mathbf{r}}),2}-1)+B_{21}^XR_t^{(\infty,{\mathbf{r}}),1}-B_{21}^Y(R_t^{(\infty,{\mathbf{r}}),1}-1).
	\end{align}
	Let us introduce the process $\mathbf{U}:=\{\mathbf{U}_t=(U^1_t,U^2_t):t\geq0\}$, where for each $i=1,2$ and $t>0$
	\begin{align*}
		U^{i}_t=\int_0^te^{\int_0^sA^{(i)}_udu}\sqrt{c_i^XR^{(\infty,{\mathbf{r}}),i}_s(1-R^{(\infty,{\mathbf{r}}),i}_s)^2+c_i^Y(R^{(\infty,{\mathbf{r}}),i}_s)^2(1-R^{(\infty,{\mathbf{r}}),i}_s)}dW^{i}_s.
	\end{align*}
	We note that the processes $U^1$ and $U^2$ are independent (by the independence between the Brownian motions $W^1$ and $W^2$) and for each $i=1,2,$ $U^i$ is a zero mean Gaussian process with covariance function, $C_{U^i}$, given by 
	\begin{align*}
		C_{U^i}(u,v)=\int_0^{u\wedge v}e^{2\int_0^sA^{(i)}_udu}\left[c_i^XR^{(\infty,{\mathbf{r}}),i}_s(1-R^{(\infty,{\mathbf{r}}),i}_s)^2+c_i^Y(R^{(\infty,{\mathbf{r}}),i}_s)^2(1-R^{(\infty,{\mathbf{r}}),i}_s)\right]ds,\qquad u,v\in\R_+.
	\end{align*}
	Additionally, consider the process $\mathbf{M}^{(z)}:=\{\mathbf{M}^{(z)}=(M^{(z,1)}_t,M^{(z,2)}_t): t\geq0\}$ defined as
	\begin{align*}
		M^{(z,i)}_t:=\int_0^te^{\int_0^sA^{(i)}_udu}\sqrt{c_i^XR^{(z,{\mathbf{r}}),i}_s(1-R^{(z,{\mathbf{r}}),i}_s)^2+c_i^Y(R^{(z,{\mathbf{r}}),i}_s)^2(1-R^{(z,{\mathbf{r}}),i}_s)}dW^{i}_s, \qquad t\geq0, \ i=1,2.
	\end{align*}
	The proof of following auxiliary result is deferred to Appendix \ref{mart_conv_aux}.
	\begin{lemma}\label{mart_conv}
		For any $T>0$,
		\begin{equation}
			\lim_{z\to\infty}\E\left[\sup_{t\in[0,T]}\|\mathbf{M}^{(z)}_t-\mathbf{U}_t\|^2\right]=0.
		\end{equation}
	\end{lemma}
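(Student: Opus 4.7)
The plan is to express $\mathbf{M}^{(z)}_t - \mathbf{U}_t$ coordinatewise as a single stochastic integral against the same driving Brownian motion, and then apply the Burkholder--Davis--Gundy (BDG) inequality together with the $L^p$ convergence provided by Theorem \ref{lon}. For each $i=1,2$ and $t\in[0,T]$, I would write
\begin{align*}
M^{(z,i)}_t - U^i_t = \int_0^t e^{\int_0^s A^{(i)}_u\,du}\Big(\sqrt{g_i(R^{(z,\mathbf{r}),i}_s)}-\sqrt{g_i(R^{(\infty,\mathbf{r}),i}_s)}\Big)\,dW^i_s,
\end{align*}
where $g_i(r):=c_i^X r(1-r)^2 + c_i^Y r^2(1-r)$. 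Since $\mathbf{R}^{(\infty,\mathbf{r})}$ takes values in $[0,1]^2$ deterministically, $A^{(i)}$ is uniformly bounded on $[0,T]$ by a constant $K_T$ depending only on $T$ and the $B$-parameters, so the exponential weight satisfies $e^{2\int_0^s A^{(i)}_u du}\leq e^{2K_T T}$ for all $s\in[0,T]$.

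Next, I would apply BDG to the continuous martingale inside each coordinate to obtain
\begin{align*}
\E\Big[\sup_{t\in[0,T]}|M^{(z,i)}_t-U^i_t|^2\Big] \leq C\, e^{2K_T T}\, \E\!\int_0^T\!\Big(\sqrt{g_i(R^{(z,\mathbf{r}),i}_s)}-\sqrt{g_i(R^{(\infty,\mathbf{r}),i}_s)}\Big)^2 ds.
\end{align*}
The elementary inequality $(\sqrt{a}-\sqrt{b})^2\leq|a-b|$ for $a,b\geq 0$ reduces the integrand to $|g_i(R^{(z,\mathbf{r}),i}_s)-g_i(R^{(\infty,\mathbf{r}),i}_s)|$. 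Since $g_i$ is a polynomial restricted to the compact set $[0,1]$, it is Lipschitz there with some constant $L_i$, hence
\begin{align*}
\E\Big[\sup_{t\in[0,T]}|M^{(z,i)}_t-U^i_t|^2\Big] \leq C\, e^{2K_T T} L_i\, T\, \E\Big[\sup_{s\in[0,T]}|R^{(z,\mathbf{r}),i}_s-R^{(\infty,\mathbf{r}),i}_s|\Big].
\end{align*}

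Finally, I would invoke Theorem \ref{lon} with $p=1$ (or $p=2$ and Jensen) to conclude that the right-hand side vanishes as $z\to\infty$; summing over $i=1,2$ and using equivalence of norms on $\R^2$ gives the claimed limit. The only mildly delicate point is the square-root linearization step, but the inequality $(\sqrt{a}-\sqrt{b})^2\leq|a-b|$ handles this cleanly without needing lower bounds on $g_i$ (which would otherwise be problematic near the boundary of $[0,1]$, where the diffusion coefficient degenerates). Everything else is routine once the BDG bound is set up, so I do not anticipate any serious obstacle.
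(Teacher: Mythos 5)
Your proposal is correct and follows essentially the same route as the paper's proof in Appendix C: both combine the difference into one stochastic integral, control the supremum via an $L^2$ maximal inequality (you use BDG, the paper uses Doob plus It\^o's isometry, which is the same bound), linearize the square root via $(\sqrt{a}-\sqrt{b})^2\leq|a-b|$ together with the Lipschitz property of the polynomial diffusion coefficient on $[0,1]$, and conclude from Theorem \ref{lon}. The only cosmetic caveat is that Theorem \ref{lon} is stated for $p>1$, so the final step should go through your parenthetical ``$p=2$ and Jensen'' (i.e.\ Cauchy--Schwarz, as the paper does) rather than $p=1$ directly.
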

		
	In the next result we will show that the sequence of processes $\{\sqrt{z}(\mathbf{R}^{(z,\mathbf{r})}-\mathbf{R}^{(\infty,\mathbf{r})}):z\geq 1\}$ is tight in the space $\mathbb{C}(\R_+,\R^2)$ i.e. the space of continuous functions from $\R_+$ to $\R^2$ endowed with the topology of uniform convergence in compact sets. 
	\begin{proposition}\label{tightness}
		The family $\{\sqrt{z}(\mathbf{R}^{(z,\mathbf{r})}-\mathbf{R}^{(\infty,\mathbf{r})}):z\geq 1\}$ is tight in the space $\mathbb{C}(\R_+,\R^2)$.
	\end{proposition}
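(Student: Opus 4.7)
The plan is to verify Kolmogorov's tightness criterion for the rescaled fluctuation processes $\mathbf{V}^{(z)}_t := \sqrt{z}(\mathbf{R}^{(z,\mathbf{r})}_t - \mathbf{R}^{(\infty,\mathbf{r})}_t)$. Multiplying equation \eqref{sde_diff} by $\sqrt{z}$ yields, for each $i=1,2$,
\begin{align*}
V^{(z),i}_t &= \int_0^t \sqrt{z}\left(\tilde{b}^i(\mathbf{R}^{(z,\mathbf{r})}_s) - \tilde{b}^i(\mathbf{R}^{(\infty,\mathbf{r})}_s)\right)ds + \sqrt{z}\int_0^t a^i(\mathbf{R}^{(z,\mathbf{r})}_s)\,ds \\
&\quad + \int_0^t \sqrt{c_i^X R^{(z,\mathbf{r}),i}_s(1-R^{(z,\mathbf{r}),i}_s)^2 + c_i^Y (R^{(z,\mathbf{r}),i}_s)^2(1-R^{(z,\mathbf{r}),i}_s)}\,dW^i_s,
\end{align*}
so the rescaling exactly absorbs the $z^{-1/2}$ factor in the diffusion coefficient, leaving a martingale whose integrand is bounded on $[0,1]^2$ uniformly in $z$.

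First, I would harvest a uniform $L^p$ bound directly from \eqref{bound_tightness} in the proof of Theorem \ref{lon}: multiplying that inequality through by $z^{p/2}$, the right-hand side remains bounded as $z\to\infty$, whence
$$
\sup_{z \geq 1}\E\!\left[\sup_{t\in[0,T]}\|\mathbf{V}^{(z)}_t\|^p\right] < \infty, \qquad p>1,\ T>0.
$$

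Second, I would verify Kolmogorov's moment-increment estimate with $p=4$. Three ingredients drive it: (i) $\tilde{b}^i$ is a polynomial of degree at most two, hence Lipschitz on $[0,1]^2$, so $\sqrt{z}\,|\tilde{b}^i(\mathbf{R}^{(z,\mathbf{r})}_u) - \tilde{b}^i(\mathbf{R}^{(\infty,\mathbf{r})}_u)| \leq L\,\|\mathbf{V}^{(z)}_u\|$; (ii) $|a^i(\mathbf{x})| \leq C/z$ uniformly on $[0,1]^2$, so the second drift term is of order $1/\sqrt{z}$; (iii) the martingale integrand is bounded on $[0,1]^2$. Applying Jensen and the uniform $L^4$ bound of the previous step for the drift parts, and Burkholder-Davis-Gundy for the martingale part, one obtains
$$
\E\!\left[\|\mathbf{V}^{(z)}_t - \mathbf{V}^{(z)}_s\|^4\right] \leq C_T\,(t-s)^2, \qquad 0\leq s\leq t\leq T,\ z\geq 1,
$$
where the leading $(t-s)^2$ contribution comes from BDG on the martingale while the drift terms contribute smaller order $(t-s)^4$ pieces.

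Since $\mathbf{V}^{(z)}_0 = 0$ and the paths of $\mathbf{V}^{(z)}$ are continuous, Kolmogorov's tightness criterion (with exponents $\alpha=4$, $\beta=2$) yields tightness of $\{\mathbf{V}^{(z)}\}_{z\geq1}$ in $\mathbb{C}([0,T],\R^2)$ for every $T>0$, and hence in $\mathbb{C}(\R_+,\R^2)$ equipped with the topology of uniform convergence on compact sets. I do not foresee a significant obstacle; the only subtlety is feeding the a priori moment bound from Theorem \ref{lon} into the Lipschitz drift estimate in the right order, so that the constant $C_T$ in the Kolmogorov bound remains independent of $z$.
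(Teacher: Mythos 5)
Your proposal is correct and follows essentially the same route as the paper: both establish the fourth-moment increment bound $\E\bigl[\|\sqrt{z}(\mathbf{R}^{(z,\mathbf{r})}_{t_2}-\mathbf{R}^{(\infty,\mathbf{r})}_{t_2})-\sqrt{z}(\mathbf{R}^{(z,\mathbf{r})}_{t_1}-\mathbf{R}^{(\infty,\mathbf{r})}_{t_1})\|^4\bigr]\leq C(T)|t_2-t_1|^2$ by combining the Lipschitz/boundedness structure of the drift, the Burkholder--Davis--Gundy inequality for the martingale part, and the a priori estimate \eqref{bound_tightness} from Theorem \ref{lon} with $p=4$, before invoking the Kolmogorov--Billingsley tightness criterion. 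The paper's proof is exactly this computation, so no further comparison is needed.
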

	\begin{proof}
		Fix $T>0$ and let us define $\mathbf{S}^{(z)}:=(S_t^{(z,1)},S_t^{(z,2)})$ with
		\[
		S_t^{(z,i)}:=\sqrt{z}(R^{(z,{\mathbf{r}}),i}_t-R^{(\infty,{\mathbf{r}}),i}_t), \qquad t\geq 0, \ i=1,2.
		\]
		Let $\delta\in(0,1)$ and $\theta\in[0,\delta]$. Let $T>0$ and $(\tau_n)_{n\geq1}$ be a sequence of stopping times such that $0\leq \tau_n<T$.
		Using Jensen's inequality together with the Burkholder-Davis-Gundy inequality and proceeding like in \eqref{bound_limit}, we can find a constant $C_i>0$ such that for $i=1,2$ 
		\begin{align*}
			\E\Big[|S_{\tau_n+\theta}^{(z,i)}-S_{\tau_n}^{(z,i)}&|^4\Big]\leq 2^4z^2\Bigg[\delta^{3}\E\int_{\tau_n}^{\tau_n+\theta}\left|\tilde{b}^i(\mathbf{R}^{(z,\mathbf{r})}_s)-\tilde{b}^i(\mathbf{R}^{(\infty,\mathbf{r})}_s)+a^i(\mathbf{R}^{(z,\mathbf{r})}_s)\right|^4ds\notag\\&+C_i\E\left[\int_{\tau_n}^{\tau_n+\theta}\left(\frac{c_i^X}{z}R^{(z,{\mathbf{r}}),i}_s(1-R^{(z,{\mathbf{r}}),i}_s)^2+\frac{c_i^Y}{z}(R^{(z,{\mathbf{r}}),i}_s)^2(1-R^{(z,{\mathbf{r}}),i}_s)\right)ds\right]^{2}\Bigg]\notag\\
			&\leq K_iz^2\Bigg[\delta^{3}\E\int_{\tau_n}^{\tau_n+\theta}\sum_{i=1}^2\left|R^{(z,{\mathbf{r}}),i}_s-R^{(\infty,{\mathbf{r}}),i}_s\right|^4ds\notag\\&+\left(\frac{\delta}{z}\right)^{4}\left(2|c_i^Y-c_i^X|+\beta_i^X+\beta_i^Y\right)^4+\left(\frac{\delta}{z}\right)^{2}(c_i^Y+c_i^X)^{2}\Bigg]\notag\\
			&\leq K_iz^2\Bigg[2\delta^{4}\E\left[\sup_{u\in[0,T]}\left\|\mathbf{R}^{(z,\mathbf{r})}_u-\mathbf{R}^{(\infty,\mathbf{r})}_u\right\|^4\right]+\left(\frac{\delta}{z}\right)^{4}\left(2|c_i^Y-c_i^X|+\beta_i^X+\beta_i^Y\right)^4\notag\\&+\left(\frac{\delta}{z}\right)^{2}(c_i^Y+c_i^X)^{2}\Bigg],
		\end{align*}
		for some constant $K_i>0$. Now using \eqref{bound_tightness} there exists a constant $C(T)>0$ only depending on $T>0$, such that
		\begin{align}\label{tight_1}
			\E\Big[\|\mathbf{S}^{(z)}_{\tau_n+\theta}-\mathbf{S}^{(z)}_{\tau_n}\|^4\Big]\leq C(T)z^2\delta^{2}\left(\frac{1}{z^{2}}+\frac{1}{z^{4}}\right)\leq 2C(T)\delta^{2}, \qquad \text{for all $z>1$.}
		\end{align}
		Therefore, by Aldous' tightness criterion (see Theorem 16.10 in \cite{Bill}), the estimate \eqref{tight_1} implies that
		the sequence of continuous real processes $\{\sqrt{z}(\mathbf{R}^{(z,\mathbf{r})}-\mathbf{R}^{(\infty,\mathbf{r})}):z\geq 1\}$ is tight in the space $\mathbb{C}(\R_+,\R^2)$.
	\end{proof}
In the next result we show the convergence of the sequence of processes $\{\sqrt{z}(\mathbf{R}^{(z,\mathbf{r})}-\mathbf{R}^{(\infty,\mathbf{r})}):z\geq 1\}$ to a two-dimensional time-inhomogeneous Ornstein Uhlenbeck process. A similar result, when the volatility function is Lipschitz, is provided by Theorem 2.2 in Chapter 2, Section 2 of \cite{W}.
	\begin{theorem}
		For any fixed $T>0$, $\sqrt{z}(\mathbf{R}^{(z,\mathbf{r})}-\mathbf{R}^{(\infty,\mathbf{r})})\to \mathbf{S}^{(\infty)}$ weakly in $\mathbb{C}([0,T],\R)$, where $\mathbf{S}^{(\infty)}:=\{\mathbf{S}^{(\infty)}_t=(S^{(\infty,1)}_t,S^{(\infty,2)}_t):t\geq0\}$ is the unique strong solution to the stochastic differential equation given for $t>0$ by
		\begin{align}\label{sde_tight_3}
			S^{(\infty,1)}_t&=\int_0^t\left[\left(B_{12}^X(1-R_s^{(\infty,\mathbf{r}),1})+B_{12}^YR_s^{(\infty,\mathbf{r}),1}\right)S_s^{(\infty,2)}-A^{(1)}_sS^{(\infty,1)}_s\right]ds\notag\\&+\int_0^t\sqrt{c_1^XR_s^{(\infty,\mathbf{r}),1}(1-R_s^{(\infty,\mathbf{r}),1})^2+c_1^Y(R_s^{(\infty,\mathbf{r}),1})^2(1-R_s^{(\infty,\mathbf{r}),1})}dW^{1}_s,\notag\\
			S^{(\infty,2)}_t&=\int_0^t\left[\left(B_{21}^X(1-R_s^{(\infty,\mathbf{r}),2})+B_{21}^YR_s^{(\infty,\mathbf{r}),2}\right)S_s^{(\infty,1)}-A^{(2)}_sS^{(\infty,2)}\right]ds\notag\\&+\int_0^t\sqrt{c_2^XR_s^{(\infty,\mathbf{r}),2}(1-R_s^{(\infty,\mathbf{r}),2})^2+c_2^Y(R_s^{(\infty,\mathbf{r}),2})^2(1-R_s^{(\infty,\mathbf{r}),2})}dW^{2}_t.
		\end{align}
	\end{theorem}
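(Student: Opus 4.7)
The plan is to combine the tightness of $\{\mathbf{S}^{(z)}\}_{z\geq 1}$ established in Proposition \ref{tightness} with an identification of every weak subsequential limit as the unique strong solution of \eqref{sde_tight_3}. First I would subtract the ODE \eqref{det_lim} from the SDE \eqref{sde}, multiply through by $\sqrt{z}$, and decompose the drift of $\mathbf{R}^{(z,\mathbf{r})}$ as $b^i = \tilde{b}^i + a^i$, where $a^i$ collects the $O(1/z)$ terms (immigration together with the $(c_i^Y-c_i^X)$ logistic correction). Setting $\sigma_i^2(\mathbf{x}) = c_i^X x_i(1-x_i)^2 + c_i^Y x_i^2(1-x_i)$, this yields
\[
S^{(z,i)}_t = \sqrt{z}\!\int_0^t\!\bigl[\tilde{b}^i(\mathbf{R}^{(z,\mathbf{r})}_s) - \tilde{b}^i(\mathbf{R}^{(\infty,\mathbf{r})}_s)\bigr] ds + \sqrt{z}\!\int_0^t\! a^i(\mathbf{R}^{(z,\mathbf{r})}_s) ds + \int_0^t\!\sigma_i(\mathbf{R}^{(z,\mathbf{r})}_s)\, dW^i_s.
\]

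The key observation is that each $\tilde{b}^i$ is a quadratic polynomial, so Taylor's formula around $\mathbf{R}^{(\infty,\mathbf{r})}_s$ is exact. After multiplying the gradient piece by $\sqrt{z}$ it becomes $\nabla\tilde{b}^i(\mathbf{R}^{(\infty,\mathbf{r})}_s)\cdot \mathbf{S}^{(z)}_s$; a direct differentiation of \eqref{fun_b_tilde} shows $\partial_1\tilde{b}^1(\mathbf{R}^{(\infty,\mathbf{r})}_s) = -A^{(1)}_s$ and $\partial_2\tilde{b}^1(\mathbf{R}^{(\infty,\mathbf{r})}_s) = B_{12}^X(1 - R^{(\infty,\mathbf{r}),1}_s) + B_{12}^Y R^{(\infty,\mathbf{r}),1}_s$, which is exactly the drift of the first component of \eqref{sde_tight_3}; the case $i=2$ is symmetric. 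The Hessian piece, after scaling, takes the form $\frac{1}{2\sqrt{z}}(\mathbf{S}^{(z)}_s)^{\top} H_{\tilde b^i}\mathbf{S}^{(z)}_s$ and vanishes in $L^1([0,T])$ thanks to the $1/\sqrt{z}$ prefactor and the uniform fourth moment control of $\mathbf{S}^{(z)}$ supplied by the bound \eqref{tight_1}. The remainder term $\sqrt{z}\!\int_0^t a^i(\mathbf{R}^{(z,\mathbf{r})}_s)ds$ is uniformly $O(1/\sqrt{z})$ since $a^i$ already carries a factor $1/z$.

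To identify the limit, I would invoke the Skorokhod representation theorem along a weakly convergent subsequence of $\{\mathbf{S}^{(z)}\}$, together with the uniform $L^2$ convergence $\int_0^\cdot \sigma_i(\mathbf{R}^{(z,\mathbf{r})}_s)dW^i_s \to \int_0^\cdot \sigma_i(\mathbf{R}^{(\infty,\mathbf{r})}_s)dW^i_s$ (this is the analogue of Lemma \ref{mart_conv} without the integrating factor, and follows from the same It\^o-isometry estimate combined with Theorem \ref{lon}). This lets us pass to the limit in every term of the decomposition above and shows that each subsequential limit $\mathbf{S}^{(\infty)}$ solves \eqref{sde_tight_3}. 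Since \eqref{sde_tight_3} is linear in $\mathbf{S}$ with bounded and continuous time-dependent coefficients (the coefficients depend only on the deterministic, $[0,1]^2$-valued solution $\mathbf{R}^{(\infty,\mathbf{r})}$ of the polynomial ODE \eqref{det_lim}), it admits a unique strong solution, for instance via the explicit variation-of-constants formula with integrating factor $e^{\int_0^\cdot A^{(i)}_u du}$, driven by the Gaussian process $\mathbf{U}$. Pathwise uniqueness forces all subsequential limits to coincide, so the full family converges weakly in $\mathbb{C}([0,T],\R^2)$ to $\mathbf{S}^{(\infty)}$.

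The main obstacle I foresee is the passage to the limit inside the drift integral, specifically justifying $\int_0^t \nabla\tilde{b}^i(\mathbf{R}^{(\infty,\mathbf{r})}_s)\cdot\mathbf{S}^{(z)}_s\,ds \to \int_0^t \nabla\tilde{b}^i(\mathbf{R}^{(\infty,\mathbf{r})}_s)\cdot\mathbf{S}^{(\infty)}_s\,ds$ even though $\mathbf{S}^{(z)}$ converges only weakly. The Skorokhod representation combined with bounded convergence (exploiting the deterministic boundedness of $\nabla\tilde{b}^i(\mathbf{R}^{(\infty,\mathbf{r})}_\cdot)$) together with the uniform moment bound from Proposition \ref{tightness} handles this, but it is the most delicate step and requires some care in bookkeeping; everything else is essentially a standard ``tightness plus martingale characterization'' argument.
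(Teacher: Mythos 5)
Your proposal is correct and follows essentially the same route as the paper: tightness from Proposition \ref{tightness}, identification of every weak subsequential limit of $\mathbf{S}^{(z)}=\sqrt{z}(\mathbf{R}^{(z,\mathbf{r})}-\mathbf{R}^{(\infty,\mathbf{r})})$ as a solution of the linear SDE \eqref{sde_tight_3} via Skorohod representation plus the $L^2$-convergence of the stochastic integrals (Lemma \ref{mart_conv}), and uniqueness of the linear limit equation to upgrade subsequential to full weak convergence. The only difference is organizational: the paper first multiplies the prelimit equation by the integrating factor $e^{\int_0^t A^{(i)}_s ds}$ and carries out the drift cancellation explicitly, arriving at \eqref{sde_tight_1}--\eqref{sde_tight_2} before undoing the transformation, whereas you obtain the identical cancellation by the exact second-order Taylor expansion of the quadratic drift $\tilde{b}^i$ around $\mathbf{R}^{(\infty,\mathbf{r})}_s$ (with the Hessian remainder killed by the $1/\sqrt{z}$ prefactor and the uniform moment bound), so the algebra and all the key estimates coincide.
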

	\begin{proof}
		From Proposition \ref{tightness}, the family $\{\sqrt{z}(\mathbf{R}^{(z,\mathbf{r})}-\mathbf{R}^{(\infty,\mathbf{r})}):z\geq 1\}$ is relatively compact in $\mathbb{C}(\R_+,\R^2)$. Hence, there exists a subsequence $\{z_n\}_{n\geq 1}$ such that
		$\{\sqrt{z_n}(\mathbf{R}^{(z_n,\mathbf{r})}-\mathbf{R}^{(\infty,\mathbf{r})}):n\geq 1\}$ converges weakly to some process $\mathbf{S}^{(\infty)}:=\{\mathbf{S}^{(\infty)}_t=(S^{(\infty,1)}_t,S^{(\infty,2)}_t):t\geq0\}$ in $\mathbb{C}(\R_+,\R^2)$.
		Therefore, it is enough to prove that there is a unique limit point for any convergent subsequence.
		
		Using integration by parts together with \eqref{sde_diff} we obtain for $i=1,2$ and $t>0$
		\begin{align}\label{conv_1}
			e^{\int_0^tA^{(i)}_sds}S^{(z_n,i)}_t&=\int_0^te^{\int_0^sA^{(i)}_udu}A^{(i)}_sS^{(z_n,i)}_sds+\int_0^te^{\int_0^tA^{(i)}_udu}\sqrt{z_n}\left(\tilde{b}^i(\mathbf{R}^{(z_n,\mathbf{r})}_s)-\tilde{b}^i(\mathbf{R}^{(\infty,\mathbf{r})}_s)+a^i(\mathbf{R}^{(z_n,\mathbf{r})}_s)\right)ds\notag\\
			&+\int_0^te^{\int_0^sA^{(i)}_udu}\sqrt{c_i^XR^{(z_n,\mathbf{r}),i}_s(1-R^{(z_n,\mathbf{r}),i}_s)^2+c_i^Y(R^{(z_n,\mathbf{r}),i}_s)^2(1-R^{(z_n,\mathbf{r}),i}_s)}dW^{i}_s.
		\end{align}
		Using \eqref{fun_A}, straightforward computations give for $s>0$
		\begin{align*}
			&A^{(1)}_sS_s^{(z_n,1)}+\sqrt{z_n}\left(\tilde{b}^1(\mathbf{R}^{(z_n,\mathbf{r})}_s)-\tilde{b}^1(\mathbf{R}^{(\infty,\mathbf{r})}_s)+a^1(\mathbf{R}^{(z_n,\mathbf{r})}_s)\right)\notag\\&=-(B^X_{11}-B^Y_{11})\frac{(S_s^{(z_n,1)})^2}{\sqrt{z_n}}+B_{12}^X\left((1-R_s^{(\infty,\mathbf{r}),1})S_s^{(z_n,2)}-S_s^{(z_n,1)}\frac{S_s^{(z_n,2)}}{\sqrt{z_n}}\right)\notag\\&+B_{12}^Y\left(S_s^{(z_n,1)}\frac{S_s^{(z_n,2)}}{\sqrt{z_n}}+R_s^{(\infty,\mathbf{r}),1}S_s^{(z_n,2)}\right)+\frac{2}{\sqrt{z_n}}(c_1^Y-c_1^X)R_s^{(z_n,\mathbf{r}),1}(1-R_s^{(z_n,\mathbf{r}),1})\notag\\&+\frac{\beta_1^X}{\sqrt{z_n}}(1-R_s^{(z_n,\mathbf{r}),1})-\frac{\beta_1^Y}{\sqrt{z_n}}R_s^{(z_n,\mathbf{r}),1}.
		\end{align*}
		Therefore we can write \eqref{conv_1} for $t>0$ as
		\begin{align}\label{conv_2}
			e^{\int_0^tA^{(1)}_sds}S^{(z_n,i)}_t&-\int_0^te^{\int_0^sA^{(1)}_udu}\Bigg[-(B^X_{11}-B^Y_{11})\frac{(S_s^{(z_n,1)})^2}{\sqrt{z_n}}+B_{12}^X\left(-S_s^{(z_n,1)}\frac{S_s^{(z_n,2)}}{\sqrt{z_n}}+(1-R_s^{(\infty,\mathbf{r}),1})S_s^{(z_n,2)}\right)\notag\\&+B_{12}^Y\left(S_s^{(z_n,1)}\frac{S_s^{(z_n,2)}}{\sqrt{z_n}}+R_s^{(\infty,\mathbf{r}),1}S_s^{(z_n,2)}\right)+\frac{2}{\sqrt{z_n}}(c_1^Y-c_1^X)R_s^{(z_n,\mathbf{r}),1}(1-R_s^{(z_n,\mathbf{r}),1})\notag\\&\hspace{8cm}+\frac{\beta_1^X}{\sqrt{z_n}}(1-R_s^{(z_n,\mathbf{r}),1})-\frac{\beta_1^Y}{\sqrt{z_n}}R_s^{(z_n,\mathbf{r}),1}\Bigg]ds\notag\\&=\int_0^te^{\int_0^sA^{(1)}_udu}\sqrt{c_1^XR_s^{(z_n,\mathbf{r}),1}(1-R_s^{(z_n,\mathbf{r}),1})^2+c_1^Y(R_s^{(z_n,\mathbf{r}),1})^2(1-R_s^{(z_n,\mathbf{r}),1})}dW^{1}_s.
		\end{align}
		Using Skorohod's representation theorem (see  \cite[Theorem 6.7]{Bill}) together with bounded convergence, we obtain that for $t>0$ the left-hand side of \eqref{conv_2} converges weakly as $n\to\infty$ to
		\begin{align*}
			e^{\int_0^tA^{(1)}_sds}S^{(\infty,1)}_t-\int_0^te^{\int_0^sA^{(1)}_udu}\left[B_{12}^X(1-R_s^{(\infty,\mathbf{r}),1})+B_{12}^YR_s^{(\infty,\mathbf{r}),1}\right]S_s^{(\infty,2)}ds.
		\end{align*}
		On the other hand, an application of Lemma \ref{mart_conv} shows that for $t>0$ the right-hand side of \eqref{conv_2} converges weakly as $n\to\infty$ to
		\begin{align*}
			\int_0^te^{\int_0^sA^{(1)}_udu}\sqrt{c_1^XR^{(\infty,\mathbf{r}),1}_s(1-R^{(\infty,\mathbf{r}),1}_s)^2+c_1^Y(R^{(\infty,\mathbf{r}),1}_s)^2(1-R^{(\infty,\mathbf{r}),1}_s)}dW^{1}_s.
		\end{align*}
		Hence, $S^{(\infty,1)}$ satisfies the following
		\begin{align}\label{sde_tight_1}
			e^{\int_0^tA^{(1)}_sds}&S^{(\infty,1)}_t-\int_0^te^{\int_0^sA^{(1)}_udu}\left[B_{12}^X(1-R_s^{(\infty,\mathbf{r}),1})+B_{12}^YR_s^{(\infty,\mathbf{r}),1}\right]S_s^{(\infty,2)}ds\notag\\&=\int_0^te^{\int_0^sA^{(1)}_udu}\sqrt{c_1^XR^{(\infty,\mathbf{r}),1}_s(1-R^{(\infty,\mathbf{r}),1}_s)^2+c_1^Y(R^{(\infty,\mathbf{r}),1}_s)^2(1-R^{(\infty,\mathbf{r}),1}_s)}dW^{1}_s, \qquad t\geq0.
		\end{align}
		Proceeding similarly for the case $i=2$, we obtain
		\begin{align}\label{sde_tight_2}
			e^{\int_0^tA^{(2)}_sds}&S^{(\infty,2)}_t-\int_0^te^{\int_0^sA^{(2)}_udu}\left[B_{21}^X(1-R_s^{(\infty,\mathbf{r}),2})+B_{21}^YR_s^{(\infty,\mathbf{r}),2}\right]S_s^{(\infty,1)}ds\notag\\&=\int_0^te^{\int_0^sA^{(2)}_udu}\sqrt{c_2^XR^{(\infty,\mathbf{r}),2}_s(1-R^{(\infty,\mathbf{r}),2}_s)^2+c_2^Y(R^{(\infty,\mathbf{r}),2}_s)^2(1-R^{(\infty,\mathbf{r}),2}_s)}dW^{2}_s, \qquad t\geq0.
		\end{align}
		Using integration by parts in \eqref{sde_tight_1} and \eqref{sde_tight_2} we obtain that $\mathbf{S}^{(\infty)}$ is solution to \eqref{sde_tight_3}. To conclude the proof, we note that \eqref{sde_tight_3} is a linear stochastic differential equation, then by the classical result on existence and uniqueness of solutions (see Chapter IX, Theorem 2.1 in \cite{RY}) there exists a unique strong solution to \eqref{sde_tight_3}.
		
	\end{proof}
	
	\appendix
	\section{Proof of Lemma \ref{mart_prob}}\label{mart_prob_proof}
		Using the fact that $X^1$ and $X^2$ are semimartingales and that $f$ is twice continuously differentiable on $[0,1]^2\times\R_+^2$ we can apply It\^o's formula (Theorem 3.3 in \cite{RY}) to obtain for $t>0$
	\begin{align}\label{mart}
		f\left(\frac{X^1_{t\wedge\tau}}{X^1_{t\wedge\tau}+Y^1_{t\wedge\tau}},\frac{X^2_{t\wedge\tau}}{X^2_{t\wedge\tau}+Y^2_{t\wedge\tau}},X^1_{t\wedge\tau}+Y^1_{t\wedge\tau},X^2_{t\wedge\tau}+Y^2_{t\wedge\tau}\right)&=f\left(\frac{x^1}{x^1+y^1},\frac{x^2}{x^2+y^2},x^1+y^1,x^2+y^2\right)\notag\\
		&+\int_0^{t\wedge\tau}\sum_{i=1}^3A^i(X^1_s,Y^1_s,X^2_s,Y^2_s)ds+M_{t\wedge\tau}, 
	\end{align}
	where $M=\{M_t:t>0\}$ is a local martingale and
	\begin{align*}
		A^1(x_1,y_1,x_2,y_2):&=\left[((\partial_3f)\circ g)(x_1,y_1,x_2,y_2)+((\partial_1f)\circ g)(x_1,y_1,x_2,y_2)\frac{y_1}{(x_1+y_1)^2}\right]\left(B^X_{11}x_1+B^X_{12}x_2+\beta_1^X\right)\\
		&+\left[((\partial_3f)\circ g)(x_1,y_1,x_2,y_2)-((\partial_1f)\circ g)(x_1,y_1,x_2,y_2)\frac{x_1}{(x_1+y_1)^2}\right]\left(B^Y_{11}y_1+B^Y_{12}y_2+\beta_1^Y\right)\\
		&+\left[((\partial_4f)\circ g)(x_1,y_1,x_2,y_2)+((\partial_2f)\circ g)(x_1,y_1,x_2,y_2)\frac{y_2}{(x_2+y_2)^2}\right]\left(B^X_{21}x_1+B^X_{22}x_2+\beta_2^X\right)\\
		&+\left[((\partial_4f)\circ g)(x_1,y_1,x_2,y_2)-((\partial_2f)\circ g)(x_1,y_1,x_2,y_2)\frac{x_2}{(x_2+y_2)^2}\right]\left(B^Y_{21}y_1+B^Y_{22}y_2+\beta_2^Y\right),
	\end{align*}
	with
	\[
	g(x_1,y_1,x_2,y_2):=\left(\frac{x_1}{x_1+y_1},\frac{x_2}{x_2+y_2},x_1+y_1,x_2+y_2\right).
	\]
	Additionally,
	\begin{align*}
		&A^2(x_1,y_1,x_2,y_2):=c_1^Xx_1\Bigg[((\partial^2_{13}f)\circ g)(x_1,y_1,x_2,y_2)\frac{y_1}{(x_1+y_1)^2}+((\partial^2_{11}f)\circ g)(x_1,y_1,x_2,y_2)\frac{y^2_1}{(x_1+y_1)^4}\\
		&-((\partial_{1}f)\circ g)(x_1,y_1,x_2,y_2)\frac{2y_1}{(x_1+y_1)^3}+((\partial^2_{31}f)\circ g)(x_1,y_1,x_2,y_2)\frac{y_1}{(x_1+y_1)^2}+((\partial^2_{33}f)\circ g)(x_1,y_1,x_2,y_2)\Bigg]\\
		&+c_2^Xx_2\Bigg[((\partial^2_{24}f)\circ g)(x_1,y_1,x_2,y_2)\frac{y_2}{(x_2+y_2)^2}+((\partial^2_{22}f)\circ g)(x_1,y_1,x_2,y_2)\frac{y^2_2}{(x_2+y_2)^4}\\
		&-((\partial_{2}f)\circ g)(x_1,y_1,x_2,y_2)\frac{2y_2}{(x_2+y_2)^3}+((\partial^2_{42}f)\circ g)(x_1,y_1,x_2,y_2)\frac{y_2}{(x_2+y_2)^2}+((\partial^2_{44}f)\circ g)(x_1,y_1,x_2,y_2)\Bigg],
	\end{align*}
	\begin{align*}
		&A^3(x_1,y_1,x_2,y_2):=c_1^Yy_1\Bigg[-((\partial^2_{13}f)\circ g)(x_1,y_1,x_2,y_2)\frac{x_1}{(x_1+y_1)^2}+((\partial^2_{11}f)\circ g)(x_1,y_1,x_2,y_2)\frac{x^2_1}{(x_1+y_1)^4}\\
		&+((\partial_{1}f)\circ g)(x_1,y_1,x_2,y_2)\frac{2x_1}{(x_1+y_1)^3}-((\partial^2_{31}f)\circ g)(x_1,y_1,x_2,y_2)\frac{x_1}{(x_1+y_1)^2}+((\partial^2_{33}f)\circ g)(x_1,y_1,x_2,y_2)\Bigg]\\
		&+c_2^Yy_2\Bigg[-((\partial^2_{24}f)\circ g)(x_1,y_1,x_2,y_2)\frac{x_2}{(x_2+y_2)^2}+((\partial^2_{22}f)\circ g)(x_1,y_1,x_2,y_2)\frac{x^2_2}{(x_2+y_2)^4}\\
		&+((\partial_{2}f)\circ g)(x_1,y_1,x_2,y_2)\frac{2x_2}{(x_2+y_2)^3}-((\partial^2_{42}f)\circ g)(x_1,y_1,x_2,y_2)\frac{x_2}{(x_2+y_2)^2}+((\partial^2_{44}f)\circ g)(x_1,y_1,x_2,y_2)\Bigg].
	\end{align*}
	Now using \eqref{rel_freq} and \eqref{pop_size} we can express \eqref{mart} in terms of the process $(R,Z)=(R^1,R^2,Z^1,Z^2)$ as
	\begin{align}\label{mart_aux_2}
		f(R^1_{t\wedge\tau},R^2_{t\wedge\tau},Z^1_{t\wedge\tau},Z^2_{t\wedge\tau})=f(r^1,r^2,z^1,z^2)+\int_0^{t\wedge\tau}\sum_{i=1}^3B^i(R^1_s,R^2_s,Z^1_s,Z^2_s)ds+M_{t\wedge\tau},\qquad t\geq0,
	\end{align}
	where
	\begin{align}\label{b_1}
		B^1(r_1,r_2,z_1,z_2):&=\partial_1f(r_1,r_2,z_1,z_2)(1-r_1)\left(B^X_{11}r_1+B^X_{12}r_2\frac{z_2}{z_1}+\frac{\beta^X_1}{z_1}\right)\notag\\
		&+\partial_3f(r_1,r_2,z_1,z_2)\left(B^X_{11}z_1r_1+B^X_{12}r_2z_2+\beta^X_1\right)\notag\\
		&-\partial_1f(r_1,r_2,z_1,z_2)r_1\left(B^Y_{11}(1-r_1)+B^Y_{12}(1-r_2)\frac{z_2}{z_1}+\frac{\beta^Y_1}{z_1}\right)\notag\\
		&+\partial_3f(r_1,r_2,z_1,z_2)\left(B^Y_{11}z_1(1-r_1)+B^Y_{12}(1-r_2)z_2+\beta^Y_1\right)\notag\\
		&+\partial_2f(r_1,r_2,z_1,z_2)(1-r_2)\left(B^X_{21}r_1\frac{z_1}{z_2}+B^X_{22}r_2+\frac{\beta^X_2}{z_2}\right)\notag\\
		&+\partial_4f(r_1,r_2,z_1,z_2)\left(B^X_{21}z_1r_1+B^X_{22}r_2z_2+\beta^X_2\right)\notag\\
		&-\partial_2f(r_1,r_2,z_1,z_2)r_2\left(B^Y_{21}(1-r_1)\frac{z_1}{z_2}+B^Y_{22}(1-r_2)+\frac{\beta^Y_2}{z_2}\right)\notag\\
		&+\partial_4f(r_1,r_2,z_1,z_2)\left(B^Y_{21}z_1(1-r_1)+B^Y_{22}(1-r_2)z_2+\beta^Y_2\right),
	\end{align}
	\begin{align}\label{b_2}
		B^2(r_1,r_2,z_1,z_2):&=c_1^Xr_1(1-r_1)\left[\partial_{13}^2f(r_1,r_2,z_1,z_2)+\partial_{31}^2f(r_1,r_2,z_1,z_2)\right]+c_1^Xr_1z_1\partial_{33}^2f(r_1,r_2,z_1,z_2)\notag\\
		&+\frac{c_1^X}{z_1}\left[r_1(1-r_1)^2\partial_{11}^2f(r_1,r_2,z_1,z_2)-2r_1(1-r_1)\partial_{1}f(r_1,r_2,z_1,z_2)\right]\notag\\
		&+c_2^Xr_2(1-r_2)\left[\partial_{24}^2f(r_1,r_2,z_1,z_2)+\partial_{42}^2f(r_1,r_2,z_1,z_2)\right]+c_2^Xr_2z_2\partial_{44}^2f(r_1,r_2,z_1,z_2)\notag\\
		&+\frac{c_2^X}{z_2}\left[r_2(1-r_2)^2\partial_{22}^2f(r_1,r_2,z_1,z_2)-2r_2(1-r_2)\partial_{2}f(r_1,r_2,z_1,z_2)\right],
	\end{align}
	and,
	\begin{align}\label{b_3}
		B^3(r_1,z_1,r_2,z_2):&=-c_1^Yr_1(1-r_1)\left[\partial_{13}^2f(r_1,r_2,z_1,z_2)+\partial_{31}^2f(r_1,r_2,z_1,z_2)\right]+c_1^Y(1-r_1)z_1\partial_{33}^2f(r_1,r_2,z_1,z_2)\notag\\
		&+\frac{c_1^Y}{z_1}\left[r_1^2(1-r_1)\partial_{11}^2f(r_1,r_2,z_1,z_2)+2r_1(1-r_1)\partial_{1}f(r_1,r_2,z_1,z_2)\right]\notag\\
		&-c_2^Yr_2(1-r_2)\left[\partial_{24}^2f(r_1,r_2,z_1,z_2)+\partial_{42}^2f(r_1,r_2,z_1,z_2)\right]+c_2^Y(1-r_2)z_2\partial_{44}^2f(r_1,r_2,z_1,z_2)\notag\\
		&+\frac{c_2^Y}{z_2}\left[r_2^2(1-r_2)\partial_{22}^2f(r_1,r_2,z_1,z_2)+2r_2(1-r_2)\partial_{2}f(r_1,r_2,z_1,z_2)\right].
	\end{align}
	Hence using \eqref{mart_aux_2} together with \eqref{b_1}-\eqref{b_3} we obtain the result.
	\section{Proof of Proposition \ref{exis_uni_sde}}\label{unique}
	\textit{Step 1.-} First we will prove that any solution $\mathbf{R}^{(\mathbf{z},\mathbf{r})}$ to the system \eqref{sde} lies in the set $[0,1]^2$ with probability one. To this end let us denote for $x\in[0,1]$
	\begin{align*}
		\sigma^1(x)&=\sqrt{x(1-x)\left(\frac{c_1^X}{z_1}(1-x)+\frac{c_1^Y}{z_1}x\right)}1_{\{x\in[0,1]\}},\notag\\
		\sigma^2(x)&=\sqrt{x(1-x)\left(\frac{c_2^X}{z_2}(1-x)+\frac{c_2^Y}{z_2}x\right)}1_{\{x\in[0,1]\}}.
	\end{align*}
	Now, we note that the following conditions are satisfied:
	\begin{itemize}
		\item[(i)] For all $x\in\R\backslash[0,1]$ and $i=1,2$ we have that $\sigma^i(x)=0$. 
		\item[(ii)] For $\mathbf{x}=(x_1,x_2)\in\mathbb{R}^2$ with $x_1>1$ we have
		\begin{align*}
			b^1(\mathbf{x})&=-B_{12}^Y\frac{z_2}{z_1}x_1(1-x_2)1_{\{x_2\leq 1\}}+\frac{\beta_1^X}{z_1}(1-x_1)-\frac{\beta_1^Y}{z_1}x_1\leq 0,
		\end{align*}
		additionally for $\mathbf{x}=(x_1,x_2)\in\mathbb{R}^2$ with $x_1<0$ we have
		\begin{align*}
			b^1(\mathbf{x})&=B^X_{12}\frac{z_2}{z_1}(1-x_1)x_21_{\{x_2\geq0 \}}+\frac{\beta_1^X}{z_1}(1-x_1)-\frac{\beta_1^Y}{z_1}x_1\geq 0.
		\end{align*}
		\item[(iii)] For $\mathbf{x}=(x_1,x_2)\in\mathbb{R}^2$ with $x_2>1$ we have
		\begin{align*}
			b^2(\mathbf{x})&=-B^Y_{21}\frac{z_1}{z_2}x_2(1-x_1)1_{\{x_1\leq 1 \}}+\frac{\beta_2^X}{z_2}(1-x_2)-\frac{\beta_2^Y}{z_2}x_2\leq 0.
		\end{align*}
		Additionally for $\mathbf{x}=(x_1,x_2)\in\mathbb{R}^2$ with $x_2<0$ we have
		\begin{align*}
			b^2(\mathbf{x})&=B^X_{21}\frac{z_1}{z_2}x_1(1-x_2)1_{\{x_1\geq 0 \}}+\frac{\beta_2^X}{z_2}(1-x_2)-\frac{\beta_2^Y}{z_2}x_2\geq 0.
		\end{align*}
	\end{itemize}
	Hence, using the previous conditions and proceeding like in the proof of Proposition 2.1 in \cite{FL} we can conclude that $\mathbb{P}(\mathbf{R}^{(\mathbf{z},\mathbf{r})}_t\in[0,1]^2\ \text{ for all $t\geq0$})=1$. 
	
	\textit{Step 2.-} By the results on continuous-type stochastic differential equations (see page 173 in \cite{IW}), there exists a weak solution to \eqref{sde}. Then, in order to prove the existence of a unique strong solution, it remains to prove that the pathwise uniqueness of solutions to \eqref{sde} holds.
	
	To this end, we note that for $\mathbf{x},\mathbf{y}\in[0,1]^2$
	\begin{align*}
		|b^1(\mathbf{x})-b^1(\mathbf{y})|&\leq 2\Bigg[|B_{11}^X-B_{11}^Y|+\frac{z_2}{z_1}(B^X_{12}+B^Y_{12})+\frac{2}{z_1}|c_1^Y-c_1^X|+\frac{\beta_1^X}{z_1}+\frac{\beta_1^Y}{z_1}\Bigg]\left[|x_1-y_1|+|x_2-y_2|\right],
	\end{align*}
	and 
	\begin{align*}
		|b^2(\mathbf{x})-b^2(\mathbf{y})|&\leq 2\Bigg[|B_{22}^X-B_{22}^Y|+\frac{z_1}{z_2}(B^X_{21}+B^Y_{21})+\frac{2}{z_2}|c_2^Y-c_2^X|+\frac{\beta_2^X}{z_2}+\frac{\beta_2^Y}{z_2}\Bigg]\left[|x_1-y_1|+|x_2-y_2|\right].
	\end{align*}
	Additionally for $x,y\in[0,1]$ and $i=1,2,$
	\begin{align*}
		|\sigma^i(x)-\sigma^i(y)|^2&\leq \left|x(1-x)\left(\frac{c_i^X}{z_i}(1-x)+\frac{c_i^Y}{z_i}x\right)-y(1-y)\left(\frac{c_i^X}{z_i}(1-y)+\frac{c_i^Y}{z_i}y\right)\right|\notag\\
		&\leq 3\left(\frac{c_i^X}{z_i}+\frac{c_i^Y}{z_i}\right)|x-y|.
	\end{align*} 
	Therefore by Theorem 2 in \cite{GM} the pathwise uniqueness holds for \eqref{sde}, and hence there exists a unique strong soution to \eqref{sde} (see for instance \cite[p.~104]{SITU}).
	
	\textit{Step 3.-} Finally, we will prove the last assertion in the statement of the proposition. Let us consider $\mathbf{r},\overline{\mathbf{r}}\in[0,1]^2$ and two solutions $\mathbf{R}^{(\mathbf{z},\mathbf{r})}$ and $\mathbf{R}^{(\mathbf{z},\overline{\mathbf{r}})}$ to the stochastic differential equation \eqref{sde} with initial conditions $\mathbf{R}^{(\mathbf{z},\mathbf{r})}_0=\mathbf{r}=(r_1,r_2)$ and $\mathbf{R}^{(\mathbf{z},\overline{\mathbf{r}})}_0=\overline{\mathbf{r}}=(\overline{r}_1,\overline{r}_2)$. Then, proceeding as in the proof of Theorem 2 in \cite{GM} (but considering different initial conditions) we have that there exists a constant $K>0$ such that
	\begin{align*}
		\E\left[\left|R^{(\mathbf{z},\mathbf{r}),i}_t-R^{(\mathbf{z},\overline{\mathbf{r}}),i}_t\right|\right]\leq |r_i-\overline{r}_i|+K\int_0^t\E\left[\left\|\mathbf{R}^{(\mathbf{z},\mathbf{r})}_s-\mathbf{R}^{(\mathbf{z},\overline{\mathbf{r}})}_s\right\|\right]ds,\qquad t\geq0, \ i=1,2.
	\end{align*}
	Hence,
	\begin{align*}
		\E\left[\left\|\mathbf{R}^{(\mathbf{z},\mathbf{r})}_t-\mathbf{R}^{(\mathbf{z},\overline{\mathbf{r}})}_t\right\|\right]&\leq \sum_{i=1}^2|r_i-\overline{r}_i|+2K\int_0^t\E\left[\left\|\mathbf{R}^{(\mathbf{z},\mathbf{r})}_s-\mathbf{R}^{(\mathbf{z},\overline{\mathbf{r}})}_s\right\|\right]ds\notag\\
		&\leq 2|\mathbf{r}-\overline{\mathbf{r}}|+2K\int_0^t\E\left[\left\|\mathbf{R}^{(\mathbf{z},\mathbf{r})}_s-\mathbf{R}^{(\mathbf{z},\overline{\mathbf{r}})}_s\right\|\right]ds,\qquad t\geq0.
	\end{align*}
	Hence, by Gronwall's inequality we obtain \eqref{est_1}.
	\section{Proof of Proposition \ref{gen_culling}}\label{generator}
	(i) Let us denote the semigroup of the process $\mathbf{R}^{(\mathbf{z},\mathbf{r})}$ by $(\mathcal{T}_t)_{t\geq0}$, given for any $f\in\mathcal{C}([0,1]^2)$ by
	\begin{align*}
		\mathcal{T}_tf(\mathbf{r})=\E\left[f(\mathbf{R}^{(\mathbf{z},\mathbf{r})}_t)\right], \qquad \mathbf{r}\in[0,1]^2.
	\end{align*}
	Let us consider $f\in\mathcal{C}^1([0,1]^2)$, then using the multivariate mean value theorem together with \eqref{est_1} 
	\begin{align*}
		|\mathcal{T}_tf(\mathbf{r})-\mathcal{T}_tf(\overline{\mathbf{r}})|=\left|\E\left[f(\mathbf{R}^{(\mathbf{z},\mathbf{r})}_t)\right]-\E\left[f(\mathbf{R}^{(\mathbf{z},\overline{\mathbf{r}})}_t)\right]\right|\leq C(t)\|\nabla f\|_{\infty}\|\mathbf{r}-\overline{\mathbf{r}}\|,
	\end{align*}
	which implies that $\mathbf{r}\mapsto\mathcal{T}_tf(\mathbf{r})$ is continuous. On the other hand, for any $g\in\mathcal{C}([0,1]^2)$ we can find a sequence of functions $(f_n)_{n\geq1}\subset\mathcal{C}^1([0,1]^2)$ such that $f_n\to g$ uniformly on $[0,1]^2$ as $n\to\infty$. Hence,
	\begin{align*}
		\left|\mathcal{T}_tg(\mathbf{r})-\mathcal{T}_tf_n(\mathbf{r})\right|\leq \E\left[\left|g(\mathbf{R}^{(\mathbf{z},\mathbf{r})}_t)-f_n(\mathbf{R}^{(\mathbf{z},\mathbf{r})}_t)\right|\right]\leq \|f_n-g\|_{\infty}, \qquad \mathbf{r}\in[0,1]^2,
	\end{align*}
	therefore $\mathcal{T}_tf_n\to\mathcal{T}_tg$ uniformly on $[0,1]^2$ as well. This implies that $\mathbf{r}\mapsto\mathcal{T}_tg(\mathbf{r})$ is continuous, and hence $\mathcal{T}_t(\mathcal{C}([0,1]^2))\subset \mathcal{C}([0,1]^2)$.
	
	(ii) Lets consider $f\in\mathcal{C}^2([0,1]^2)$, then by an application of It\^o's formula (see Theorem 3.3 in \cite{RY}) together with \eqref{sde} we obtain for $t>0$
	\begin{align}\label{gen_1_proof}
		f(R_t^{(\mathbf{z},\mathbf{r}),1},R_t^{(\mathbf{z},\mathbf{r}),2})=f(r_1,r_2)+\int_0^t\mathcal{L}^{\mathbf{z}}f(R^{(\mathbf{z},\mathbf{r}),1}_s,R^{(\mathbf{z},\mathbf{r}),2}_s)ds+M_t,
	\end{align}
	where $M$ is continuous martingale and $\mathcal{L}^{\mathbf{z}}$ is defined in \eqref{fun_d}.
	
	Using the fact that $f\in\mathcal{C}^2([0,1]^2)$ we can find a constant $C(\mathbf{z})>0$ (dependent only on $\mathbf{z}= (z_1,z_2)$) such that
	\begin{align}\label{bound_gen}
		|\mathcal{L}^{\mathbf{z}}f(r_1,r_2)|\leq C(\mathbf{z}),\qquad r\in[0,1]^2.
	\end{align}
	Hence, taking expectations in \eqref{gen_1_proof} we have
	\begin{align*}
		\sup_{\mathbf{r}\in[0,1]^2}\left|\mathcal{T}_tf(\mathbf{r})-f(\mathbf{r})\right|&=\sup_{\mathbf{r}\in[0,1]^2}\left|\E\left[f(R_t^{(\mathbf{z},\mathbf{r}),1},R_t^{(\mathbf{z},\mathbf{r}),2})\right]-f(r_1,r_2)\right|\notag\\&\leq \sup_{\mathbf{r}\in[0,1]^2}\E\left[\int_0^t\left|\mathcal{L}^{\mathbf{z}}f(R_s^{(\mathbf{z},\mathbf{r}),1},R_s^{(\mathbf{z},\mathbf{r}),2})\right|ds\right]\notag\\&\leq C(\mathbf{z})t\to0,\qquad \text{as $t\to0$.}
	\end{align*}
	This, together with the fact that $\mathcal{T}_t(\mathcal{C}([0,1]^2))\subset \mathcal{C}([0,1]^2)$ implies that $\mathbf{R}^{(\mathbf{z},\mathbf{r})}$ is a Feller process.
	
	(iii) In order to compute the infinitesimal generator of the process $R^{(z,r)}$ we note using \eqref{gen_1_proof}, \eqref{bound_gen}, and dominated convergence
	\begin{align*}
		\lim_{t\downarrow 0}\displaystyle\frac{\E\left[f(R_t^{(\mathbf{z},\mathbf{r}),1},R_t^{(\mathbf{z},\mathbf{r}),2})\right]-f(r_1,r_2)}{t}=\lim_{t\downarrow 0}\E\left[\frac{1}{t}\int_0^t\mathcal{L}^{\mathbf{z}}f(R^{(\mathbf{z},\mathbf{r}),1}_s,R^{(\mathbf{z},\mathbf{r}),2}_s)ds\right]=\mathcal{L}^{\mathbf{z}}f(r_1,r_2).
	\end{align*}
	Hence, using Theorem 1.33 in \cite{Sch} we obtain the result.
		\section{Proof of Lemma \ref{mart_conv}}\label{mart_conv_aux}
		Using Doob's inequality together with It\^o's isometry we obtain for each $i=1,2$
		\begin{align*}
			&\E\left[\sup_{t\in[0,T]}|M^{(z,i)}_t-U^i_t|^2\right]\notag\\&=\E\Bigg[\sup_{t\in[0,T]}\Bigg|\int_0^te^{\int_0^sA^{(i)}_udu}\sqrt{c_i^XR^{(z,{\mathbf{r}}),i}_s(1-R^{(z,{\mathbf{r}}),i}_s)^2+c_i^Y(R^{(z,{\mathbf{r}}),i}_s)^2(1-R^{(z,{\mathbf{r}}),i}_s)}dW^{i}_s\notag\\
			&\hspace{4cm}-\int_0^te^{\int_0^sA^{(i)}_udu}\sqrt{c_i^XR^{(\infty,{\mathbf{r}}),i}_t(1-R^{(\infty,{\mathbf{r}}),i}_s)^2+c_i^Y(R^{(\infty,{\mathbf{r}}),i}_s)^2(1-R^{(\infty,{\mathbf{r}}),i}_s)}dW^{i}_s\Bigg|^2\Bigg]\notag\\
			&\leq 2\E\Bigg[\int_0^Te^{2\int_0^sA^{(i)}_udu}\Bigg[\sqrt{c_i^XR^{(z,{\mathbf{r}}),i}_s(1-R^{(z,{\mathbf{r}}),i}_s)^2+c_i^Y(R^{(z,{\mathbf{r}}),i}_s)^2(1-R^{(z,{\mathbf{r}}),i}_s)}ds\notag\\
			&\hspace{5cm}-\sqrt{c_i^XR^{(\infty,{\mathbf{r}}),i}_t(1-R^{(\infty,{\mathbf{r}}),i}_s)^2+c_i^Y(R^{(\infty,{\mathbf{r}}),i}_s)^2(1-R^{(\infty,{\mathbf{r}}),i}_s)}\Bigg]^2ds\Bigg]\notag\\
			&\leq 6(c_i^X+c_i^Y)\E\int_0^Te^{2\int_0^sA^{(i)}_udu}|R^{(z,{\mathbf{r}}),i}_s-R^{(\infty,{\mathbf{r}}),i}_s|ds\leq 6(c_i^X+c_i^Y)C_i(T)\E\left[\sup_{t\in[0,T]}\|\mathbf{R}^{(z,\mathbf{r})}_t-\mathbf{R}^{(\infty,\mathbf{r})}_t\|\right],
		\end{align*}
		for some constant $C_i(T)>0$ only dependent on $T>0$. Now, by the Cauchy-Schwartz inequality we obtain
		\begin{align*}
			\E\left[\sup_{t\in[0,T]}\|\mathbf{M}^{(z)}_t-\mathbf{U}_t\|^2\right]&\leq 6\sum_{i=1}^2(c_i^X+c_i^Y)C_i(T)\E\left[\sup_{t\in[0,T]}\|\mathbf{R}^{(z,\mathbf{r})}_t-\mathbf{R}^{(\infty,\mathbf{r})}_t\|\right]\notag\\&\leq  6\sum_{i=1}^2(c_i^X+c_i^Y)C_i(T)\E\left[\sup_{t\in[0,T]}\|\mathbf{R}^{(z,\mathbf{r})}_t-\mathbf{R}^{(\infty,\mathbf{r})}_t\|^2\right]^{1/2}.
		\end{align*}
		The result now follows from Theorem \ref{lon}.
	
\end{document}